\def\cleartheorem#1{%
    \expandafter\let\csname#1\endcsname\relax
    \expandafter\let\csname c@#1\endcsname\relax
}
\def\clearthms#1{ \@for\tname:=#1\do{\cleartheorem\tname} }
\newtheorem{claim}{Claim}
\newtheorem{new-claim}{Claim}
\newcommand{\fs}{{\bf s}}
\newcommand{\fS}{{\bf S}}
\newcommand{\fu}{{\bf u}}
\newcommand{\bR}{\mathbb{R}}
\newcommand{\cG}{\mathcal{G}}
\newcommand{\cT}{\mathcal{T}}
\begin{document}
\title{Maximal Information Propagation via Lotteries\thanks{
The authors thank Costantinos Daskalakis, Yossi Gilad, Victor Luchangco, and Silvio Micali
for helpful discussions in the early stage of this work, and several anonymous reviewers for helpful comments.
Part of this work was done when Bo Li was an intern at Algorand.
Bo Li is partially supported by the Hong Kong Polytechnic University (No. P0034420).}}
%
%
\author{Jing Chen\inst{1,2} \and Bo Li\inst{3}}
\authorrunning{J. Chen and B. Li.}
%
\institute{Algorand Inc, Boston, MA 02116, USA  \and
Department of Computer Science, Stony Brook University, NY 11794, USA
\\
\email{jing@algorand.com} \and
Department of Computing, Hong Kong Polytechnic University, Hong Kong, China\\
\email{comp-bo.li@polyu.edu.hk}}
\maketitle              
\begin{abstract}
Propagating information to more people through their friends is becoming an increasingly important technology used in domains such as blockchain, advertising, and social media. To incentivize people to broadcast the information, the designer may use a monetary rewarding scheme, which specifies who gets how much, to compensate for the propagation. Several properties are desirable for the rewarding scheme, such as budget feasible, individually rational, incentive compatible and Sybil-proof. In this work, we design a free market with lotteries, where every participant can decide by herself how much of the reward she wants to withhold before propagating to others. We show that in the free market, the participants have a strong incentive to maximally propagate the information and all the above properties are satisfied automatically.

\keywords{Information propagation  \and Nash equilibrium \and Free market design.}
\end{abstract}
\section{Introduction}

Propagating information to more people through their friends is becoming an increasingly important technique used in many fields including advertising, social media \cite{drucker2012simpler} and blockchain \cite{babaioff2012bitcoin,DBLP:conf/marble/ErsoyEL19}.
To incentivize people to broadcast the information, the information holder, i.e., the mechanism designer, may use a monetary rewarding scheme $\mathbf{r} = (r_1,\cdots,r_n)$ to compensate people, where $r_i$ is the reward assigned to player $i$.
The rewarding scheme is expected to satisfy several properties, such as being {\em incentive compatible}, {\em (strongly) budget feasible}, {\em individual rational}, and {\em Sybil-proof}.
Informally, incentive compatibility requires that each player does not decrease her utility by propagating the information,
budget feasibility requires $\sum_i r_i \le B$ when the mechanism designer has a budget of $B$,
strong budget feasibility requires $\sum_i r_i = B$,
individual rationality requires $r_i \ge 0$ for all~$i$,
and Sybil-proofness requires the players do not benefit by making fake copies in the information propagation process.
Accordingly, designing a rewarding scheme to satisfy all or some  of the above properties establishes a large research agenda.
For example, maximal information propagation with budgets is studied in \cite{DBLP:conf/ecai/ShiZSWZ20},
where a rewarding scheme that is incentive compatible, strongly budget feasible and individual rational is proposed.
But the scheme is not Sybil-proof.
A Sybil-proof rewarding scheme is designed in \cite{chen2010mobicent}, but it is not strongly budget feasible,
where a small portion of the claimed reward is distributed among the players.
In this work, instead of designing a centralized rewarding scheme,
we propose a free market with lotteries where everyone can decide how much of the reward she wants to withhold before propagating to others.
We show that in such a market, the players have strong incentives to fully propagate the information and all above properties are satisfied automatically.

To illustrate our design, let us first consider a toy game.
Initially, a seller sends her promotion information to a small number of players she is able to reach, denoted by $N$ and $n = |N|$ who are called {\em aware players}.
The information is associated with a lottery such that one {\em winner} among aware players will be uniformly and randomly selected to get a reward normalized to \$1. 
When nobody propagates the information, everyone's expected reward is $1/n$.
For player $i \in N$, she has a set of friends $F_i$ such that $f_i = |F_i| \ge 1$ and $N\cap F_i = \emptyset$. 
If $i$ signs an agreement with $F_i$ such that if anyone in $F_i$ is selected to be the winner, the reward is given $i$, 
then by propagating the information to $F_i$, player $i$'s reward, which equals to the probability that the winner is selected from $F_i\cup \{i\}$, is 
\begin{align} \label{eq:intro}
    \frac{1+f_i}{n+f_i} > \frac{1}{n}. 
\end{align}
That is $i$ can increase her reward by propagating the information to her friends,
and thus no propagation for $N$ is not a Nash equilibrium.
Actually, since Inequality \ref{eq:intro} holds for any $n \ge 2$, everyone's dominate strategy is to propagate the formation to their friends.
But to what extent will they propagate?

Assuming all players in $N$ except $i$ inform the information to their friends and withhold the complete reward, let us see what $i$ will do.
If $i$ does not withhold the complete reward but shares a small amount, say $0<c<1$, with $F_i$,
$i$ can again improve her utility.
For $j \in F_i$, let $F_j$ be $j$'s friends who are not in the game. 
Supposing all $j\in F_j$ adopts the same strategy with $N\setminus\{i\}$, i.e., fully propagating the information to their friends and withholding the complete reward $c$, we compare two cases for player $i$: (1) $i$ does not leave any reward to $j$, and (2) $i$ leaves $c$ to $j$.
To ease the notation, suppose there are $n'$ players who are in the game except $j$'s friends $F_j$ for every $j \in F_i$.
It is not hard to see that for case (1), player $i$'s reward is $(1+f_i)/n'$ and for case (2), her reward is
\[
\frac{1 + (1-c) \cdot \sum_{j \in F_i} f_j}{\sum_{j \in F_i} f_j+ n'} > \frac{1+f_i}{n'},
\]
as long as for all $j \in F_i$,
\[
f_j > \frac{n+1}{n(1-c) - 1} \to 1 \text{ if $c \to 0$.}
\]
Thus we conclude that withholding the complete reward from propagation is not a Nash equilibrium,
and every player wants to leave partial reward to her friends for incentivizing them to further propagate the information.

We note that the initial $n$ aware players are in the {\em Prisoner's dilemma}.
If they do not propagate the information, each of them has expected utility of $1/n$.
However, it is dominant for each of them to refer their friends by sacrificing partial reward, resulting the expected utility strictly smaller than $1/n$.
This dilemma actually motivates the information propagation in the free market.

A more practical example for the above scenario is the mining game of Bitcoin \cite{bitcoin2008}.
In Bitcoin, when a user makes a transaction (the information sender), she wants the transaction to be broadcasted
(with other necessary information such as account information, transfer amount, crypto signature and etc)
in the network so that the miners can authorize the validity of the transaction and assemble newly
verified transactions into blocks.
The miners compete to propose their blocks to the public chain by solving a computationally hard puzzle,
and the winning probability is proportional to the share of each miner's computation power in the system.
Accordingly, the transaction maker can reward the winning miner who authorizes her transaction a fixed amount of Bitcoins.
At first glance, it seems that the miners may not want to broadcast the transaction
since only the miners who know the transaction can be rewarded.
A centralized rewarding scheme is proposed in \cite{babaioff2012bitcoin} which
not only rewards the winner but also other miners who helped broadcast the transaction.
By carefully designing who gets much, their scheme is Sybil-proof in tree networks.
However, as we will show in this work, the design of free market with lotteries
automatically incentivizes the miners to propagate the transaction and satisfies all other desired properties as well.
Thus the take-home message of this work is that
\begin{quote}
{\em  the mechanism designer does not need to specify each player's reward,
 the market itself already provides incentives for maximal propagation.}
\end{quote}

\subsection{Our Contribution}

We model the problem, and the Bitcoin example, as an information propagation game in a free market,
where a sender has a single piece of information to be broadcasted.
For simplicity, we first assume the players are connected by a complete $d$-ary tree with $d \ge 3$, and all players' winning probabilities are the same.
If there is an edge between two players, they are friends and one can be informed the information by the other.
A strategy profile is called {\em full propagation} if every player withholds a minimum charge
and leaves the remaining reward to all her friends, so that a maximum number of people could be aware of the information.
We show that full propagation forms a Nash equilibrium which satisfies extra properties and thus is more stable than an arbitrary one.

First, full propagation is robust to collective deviations of friends, i.e., {\em connected coalition-proof} \cite{bernheim1987coalition}.
In the seminal work by Myerson \cite{DBLP:journals/mor/Myerson77}, the communication game is proposed
where the network on players represents the possible communication between them.
Originally, it is associated with cooperative games and only coalitions formed by players who can communicate with each other
(i.e., connected subgraphs in the network) are concerned.
We adapt this principle to our problem and show that any deviation of a connected coalition of players from full propagation makes at least one of them worse off.

Second, full propagation survives in any order of iterative elimination of dominated strategies,
and uniquely survives in a particular one.
This result coincides with and generalizes the result of \cite{babaioff2012bitcoin},
where a centralized rewarding scheme is designed.
In a centralized rewarding scheme, a player can only misbehave by withholding the information and claiming Sybil copies,
while in a free market a player can arbitrarily claim how much of the reward she wants to deduct before propagating to others.
We formally discuss the difference between our work and \cite{babaioff2012bitcoin} at the end of Section \ref{sec:tree}.
Recall that strategy $s$ (weakly) dominating $s'$, denoted by $s \preceq s'$,
means choosing $s$ always gives at least as good an outcome as choosing $s'$,
no matter what the other players do,
and there is at least one profile of opponents' actions for which $s$ gives a strictly better outcome than $s'$.
We prove that full propagation is the unique strategy profile that survives in an interval-based monotone elimination
of dominated strategies, coinciding with the players' reasoning process as illustrated in the introduction.

Our main results can be summarized as follows.

\medskip

\noindent\textbf{Main Result 1.} (Theorems \ref{thm:coordinate} and \ref{thm:coordinate:elimination}) {\em
In the tree-structured free market with lotteries, full propagation achieves maximal propagation and satisfies the following:
\begin{enumerate}
\item Full propagation is a Nash equilibrium;
\item Any deviation of a coalition of friends hurts at least one of them;
\item Full propagation survives in any order of iterative elimination of dominated strategies;
\item There is an order of iterative elimination of dominated strategies such that full propagation is the  unique surviving strategy.
\end{enumerate}
}

\medskip

We then extend the above results to non-tree networks.
For arbitrary networks, we introduce stronger relationships than friends, {\em good friends} and {\em best friends}, using shortest paths from the information sender to players.
Although properties 3 and 4 in Main Result 1 do not hold,
we show that if every player has at least three good friends, full propagation is a Nash equilibrium that is also
connected coalition-proof on the induced {\em good-friendship subgraph}.
It is noted that $d$-ary tree with $d \ge 3$ is a special case satisfying this condition.

\medskip

\noindent\textbf{Main Result 2.}  (Theorem \ref{thm:nontree}) {\em
If every player has at least 3 good friends, then full propagation is a Nash equilibrium that is connected coalition-proof on the good-friendship subgraph.
}
\medskip

In conclusion, if the network is well structured, the free market with lotteries is incentive compatible where players are willing to fully propagate the information.
It is not hard to check that our scheme also satisfies other properties mentioned in the introduction.
It is strongly budget feasible and individual rational, as the full reward will be given to the players and none of them needs to pay.
Our scheme is also Sybil-proof, as in our game, the players are required to provide certificates on how many people they have referred to.
For example, in Bitcoin each miner needs to contribute their computing power on authorizing the transaction;
and in sales promotion, only people who physically appear in the market can be selected as the winner.
Therefore, as everyone in our game can arbitrarily decide how much reward she wants to get back from friends,
making Sybil identities is the same as withholding a higher fraction of the reward.

Finally, it is an interesting future direction to generalize our results to random networks.
We believe full propagation brings players high utility in a broader class of networks.
To shed more light in this direction, we conduct experiments in
Section \ref{sec:experiments} for our scheme under general random networks.
In all experiments, full propagation brings a player the maximum utility.
Moreover, the utility gap between full propagation and other strategies is actually large.

\subsection{Related Works}

Our work is partially motivated by the extensive study of incentivizing relays in a blockchain network to propagate transactions.
While this has been studied in the literature,
most of them focus on centralized algorithms where each relay's reward is fixed and decided by the algorithm.
With these algorithms, to gain higher utility, a strategic relay may claim fake copies, i.e., {\em Sybil attack} \cite{douceur2002sybil}.
Accordingly, in works such as \cite{babaioff2012bitcoin,ersoy2018transaction}, {\em Sybil-proof} algorithms are studied.
Since in our free market each relay is able to arbitrarily decide how much reward she is willing to withhold, 
it is superfluous for them to make fake copies, and thus, our results directly imply Sybil-proofness.
The free market ideas have also been discussed independently in \cite{abraham2016solidus} and \cite{chenRelay}  without a systematic analysis.

Our work also aligns with the fundamental study of what the optimal way is to reward miners for their work on authorizing transactions.
Currently, the most popular rewarding scheme, as adopted by Bitcoin,
is to reward miners proportionally to their share of the total contributed computational power.
As shown in \cite{chen2019axiomatic}, the proportional allocation rule is the unique rule that is simultaneously
non-negative, budget-balanced, symmetric, Sybil-proof, and collusion-proof.
In reality, however, to earn steady rewards, miners pool themselves together,
and the pools are vulnerable to security attacks,
such as selfish mining attack \cite{Eyal2014selfishmining,kiayias2016blockchain,koutsoupias2019blockchain,marmolejo2019competing},
block withholding attack \cite{Rosenfeld2011bwa,schrijvers2016incentive},
and denial of service attack \cite{johnson2014game}.
Cooperative games are used in \cite{lewenberg2015bitcoin} to show that
under high transaction loads, it is hard for managers to distribute rewards in a stable way.

Outside the scope of blockchain, information propagation has also been studied
in multi-level marketing \cite{emek2011mechanisms,drucker2012simpler} and
query incentive networks \cite{kleinberg2005query,arcaute2007threshold,chen2010mobicent,cebrian2012finding,chen2013sybil}.
There are major differences between the transaction propagation in a blockchain network and the
query retrieval in peer-to-peer network.
The players in a query incentive network do not compete with the ones who forwarded the message to them,
and cannot generate an answer that they do not have.
Whereas in a blockchain network, 
every aware player is a potential authorizer with probability proportional to their stakes or computational powers.

\section{Game Theoretic Model and Preliminaries}
\label{sec:game_model}


For technical simplicity, we first assume all players are connected by a tree $\cG = (V,E)$. 
Let $s$ be the root of $\cG $ who is the sender of the information.
Note that $s$ is not a player in our game.
Suppose $s$ has $f$ children and thus excluding $s$ from $G$, there are $f$ subtrees denoted by $\{T_1, \cdots, T_{f}\}$.
When there is no confusion, we also use each $T_i$ to denote the set of nodes in it.
Assume all these subtrees are complete $d$-ary and $f\ge d\ge 3$.
Call $N = V \setminus \{s\}$ the set of {\em players} and denote by $n = |N|$.
To make the players distinguishable from the sender, we stop calling $s$ the root and only call her {\em sender}.
Instead, we call the roots of these subtrees {\em roots on depth 0}, who are the initial players of the game, as shown in the following figure.
In a similar fashion, the children of these subtree roots are viewed to be on depth 1 and so on.
For each node $i \in V$, let $NB_i$ be the set of her children in $\cG$, and thus $NB_s$ contains all initial players.


\begin{figure}[htbp]
\begin{center}
\setlength{\unitlength}{0.7cm}
\begin{picture}(4.5,4.2)(-2.5,-0.25)

\put(-1, 4){\circle*{0.15}}
\multiput(-0.6,4)(0.2,0){31}{\line(1,0){0.15}}
\put(6, 3.9){$\mbox{Sender }s$}
\multiput(3.4, 2.7)(0.2,0){11}{\line(1,0){0.15}}
\put(6, 2.6){$\mbox{Depth }0$}
\put(6.05, 1.9){$\mbox{(Roots)}$}
\multiput(3.4, 1)(0.2,0){11}{\line(1,0){0.15}}
\put(6, 0.9){$\mbox{Depth }1$}
\multiput(6.6,0.2)(0,.2){3}{.}

\put(-7.3, 2){$T_1$}
\put(-1, 4){\line(-4, -1){5}}
\put(-6, 2.75){\circle*{0.15}}
\put(-6, 2.75){\line(-3, -5){1.05}}
\put(-6, 2.75){\line(3, -5){1.05}}
\put(-6, 2.75){\line(0, -1){1.8}}
\put(-7.05, 1){\circle*{0.15}}
\put(-6, 1){\circle*{0.15}}
\put(-4.95, 1){\circle*{0.15}}

\put(-7.05, 1){\line(0, -1){1.1}}
\put(-7.05, 1){\line(-1, -3){0.35}}
\put(-7.05, 1){\line(1, -3){0.35}}

\put(-6, 1){\line(0, -1){1.1}}
\put(-6, 1){\line(-1, -3){0.35}}
\put(-6, 1){\line(1, -3){0.35}}

\put(-4.95, 1){\line(0, -1){1.1}}
\put(-4.95, 1){\line(-1, -3){0.35}}
\put(-4.95, 1){\line(1, -3){0.35}}

\put(-4.2, 2){$T_2$}
\put(-1, 4){\line(-3, -2){1.9}}
\put(-2.9, 2.75){\circle*{0.15}}
\put(-2.9, 2.75){\line(-3, -5){1.05}}
\put(-2.9, 2.75){\line(3, -5){1.05}}
\put(-2.9, 2.75){\line(0, -1){1.8}}
\put(-3.95, 1){\circle*{0.15}}
\put(-2.9, 1){\circle*{0.15}}
\put(-1.85, 1){\circle*{0.15}}

\put(-3.95, 1){\line(0, -1){1.1}}
\put(-3.95, 1){\line(-1, -3){0.35}}
\put(-3.95, 1){\line(1, -3){0.35}}

\put(-2.9, 1){\line(0, -1){1.1}}
\put(-2.9, 1){\line(-1, -3){0.35}}
\put(-2.9, 1){\line(1, -3){0.35}}

\put(-1.85, 1){\line(0, -1){1.1}}
\put(-1.85, 1){\line(-1, -3){0.35}}
\put(-1.85, 1){\line(1, -3){0.35}}

\put(-1.1, 2){$T_3$}
\put(-1, 4){\line(1, -1){1.2}}
\put(0.2, 2.75){\circle*{0.15}}
\put(0.2, 2.75){\line(-3, -5){1.05}}
\put(0.2, 2.75){\line(3, -5){1.05}}
\put(0.2, 2.75){\line(0, -1){1.8}}
\put(-0.85, 1){\circle*{0.15}}
\put(0.2, 1){\circle*{0.15}}
\put(1.25, 1){\circle*{0.15}}

\put(-0.85, 1){\line(0, -1){1.1}}
\put(-0.85, 1){\line(-1, -3){0.35}}
\put(-0.85, 1){\line(1, -3){0.35}}

\put(0.2, 1){\line(0, -1){1.1}}
\put(0.2, 1){\line(-1, -3){0.35}}
\put(0.2, 1){\line(1, -3){0.35}}

\put(1.25, 1){\line(0, -1){1.1}}
\put(1.25, 1){\line(-1, -3){0.35}}
\put(1.25, 1){\line(1, -3){0.35}}

\put(-1, 4){\line(3, -1){3.8}}

\multiput(2.8,1.7)(0,.2){3}{.}

\multiput(2.8,0.2)(0,.2){3}{.}

\end{picture}
\caption{An illustration of the tree structure.}

\vspace{-10pt}
\end{center}
\end{figure}
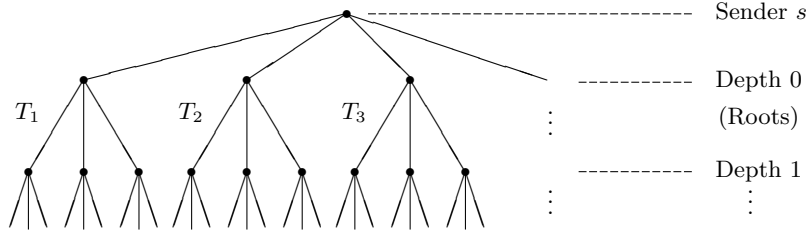

We next define the propagation game $\Gamma  = (N, \fS, \fu)$ on $\cG$.
Without loss of generality, we assume the initial reward set by the sender is 1.
Denote by $S_i$ the strategy space of each player $i\in N$.
Let $\fS = S_1\times\cdots\times S_n$ be the strategy profiles.
Any strategy $s_i \in S_i$ is a mapping from $\bR^+$ to $(\bR^+\cup\{\bot\})^{|NB_i| }$, where $\bR^+ = [0, +\infty)$.
That is, upon receiving the information with remaining reward $x_i$,
player $i$ decides how much reward $s_i(x_i)_j$ to leave to her child $j$ by sending the information to $j$
or not to inform $j$ if $s_i(x_i)_j = \bot$. Denote by $s_i(x_i) = (s_i(x_i)_1, \cdots, s_i(x_i)_{|NB_i|})$.
A {\em feasible strategy} is that for all $j\in NB_i$, if $x_i \ge x_{\min}$,
$0\le s_i(x_i)_j \le x_i - x_{\min}$ or $s_i(x_i)_j = \bot$;
otherwise, $s_i(x_i)_j \equiv \bot$.
Here $x_{\min}>0$ is a sufficiently small number prefixed by the system.
Denote by $\fs = s_1\times\cdots\times s_n$ a strategy profile.
For any strategy profile $\fs\in \fS$ and an initial reward, there is a fixed set of players who can be informed the information, called {\em aware players} and denoted by $N^*(\fs)$.
When $\fs$ is clear in the context, we simply write $N^*$.
To avoid cumbersome calculations, all aware players will get the reward with equal probability of $\frac{1}{|N^*|}$,
and the winner needs to share the reward with her ancestors from her to the sender as committed.
Thus, a player $i$'s (expected) utility consists of two parts: {\em authorizing utility} and {\em referring utility},
\begin{equation*} 
u_i(\fs; x_i) = x_i \cdot \frac{1}{|N^*|}  + \sum_{j\in NB_i, s_i(x_i)_j \neq \bot}(x_i - s_i(x_i)_j) \cdot \frac{n_{ij}}{|N^*|}, 
\end{equation*}
where $n_{ij}$ is the number of aware players following $j$ and including $j$.

Next we formulate the intuition in Inequality \ref{eq:intro} by the following lemma, which shows it is always (strictly) better for a player to propagate the information.

\begin{lemma}\label{lem:no_bot}
For any player $i$,  let $\fs_{-i}$ be $-i$'s strategy profile such that the remaining reward $x_i$ to $i$ is at least  $x_{\min}$.
Let $s_i(x_i) = (z_1, \cdots, z_j, \cdots, z_{|NB_i|})$ be a strategy with $z_j = \bot$ for some $j$
and $s'_i(x_i) = (z_1, \cdots, z'_j, \cdots, z_{|NB_i|})$ be a new strategy by changing $j$'s action from $\bot$ to $0$.
Then $u_i(s_i, \fs_{-i}) < u_i(s'_i, \fs_{-i})$.
\end{lemma}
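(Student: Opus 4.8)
The plan is to compare the two induced aware-player sets directly and reduce the claim to a single algebraic inequality. First I would fix $\fs_{-i}$ and the incoming reward $x_i \ge x_{\min}$, and set $m = |N^*(s_i,\fs_{-i})|$. The central observation is that switching child $j$'s action from $\bot$ to $0$ makes $j$ newly aware but \emph{nobody else}: since $j$ then receives reward exactly $0 < x_{\min}$, feasibility forces $s_j(0)_k \equiv \bot$, so $j$ cannot inform any of her own children. Because $\cG$ is a tree, $j$'s subtree is reachable only through $i$, so under $s_i$ none of that subtree is aware, while under $s'_i$ exactly $j$ is added. Hence $|N^*(s'_i,\fs_{-i})| = m+1$ and the count $n_{ij}$ equals $1$ under $s'_i$. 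Moreover, for every other child $k \neq j$ the subtree through $k$ is untouched, so each $n_{ik}$ (and the aware players in it) is unchanged.

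With this, I would write both utilities using the common quantity $A := \sum_{k \in NB_i,\, k \neq j,\, z_k \neq \bot} (x_i - z_k)\, n_{ik} \ge 0$, obtaining $u_i(s_i,\fs_{-i}) = (x_i + A)/m$ and $u_i(s'_i,\fs_{-i}) = (2x_i + A)/(m+1)$, where the extra $x_i$ in the second numerator is precisely $i$'s new referring term $(x_i - 0)\cdot n_{ij} = x_i$ contributed by $j$. Cross-multiplying (both denominators are positive), the desired strict inequality $u_i(s_i,\fs_{-i}) < u_i(s'_i,\fs_{-i})$ collapses to the clean condition $A < x_i(m-1)$.

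It remains to verify this inequality, which is where the only real content lies. Bounding each $x_i - z_k \le x_i$ gives $A \le x_i \cdot D$, where $D$ is the number of aware proper descendants of $i$ under $s_i$. The subtle point, and the step I expect to be the main obstacle, is establishing \emph{strictness}: if $i$ owned the entire aware population except herself (i.e.\ $D = m-1$, with all $z_k = 0$), the inequality would degrade to ``$\le$'' rather than ``$<$''. This is ruled out by the root structure. Since the sender has $f \ge d \ge 3$ children, there are at least three roots on depth $0$, and they are always aware because the sender informs them initially, independently of the strategies. At most one of these roots equals $i$ or is an ancestor of $i$ (namely the root of $i$'s own subtree), so the remaining $\ge 2$ roots are aware yet are neither $i$ nor descendants of $i$. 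Together with $i$ herself, this gives $m - D \ge 3$, hence $D \le m-2$, so $A \le x_i(m-2) < x_i(m-1)$ using $x_i \ge x_{\min} > 0$. This closes the argument, and it also pinpoints exactly why the assumption $f \ge 3$ is needed.
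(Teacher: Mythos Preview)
Your proof is correct and follows essentially the same route as the paper: both express $u_i$ under the two profiles as $(x_i+A)/m$ and $(2x_i+A)/(m+1)$ (in the paper's notation $\pi_0$ and $\sum_l \Delta_l$) and reduce to the inequality $A < x_i(m-1)$, established via $\sum_k n_{ik} < m-1$ together with $x_i - z_k \le x_i$. Your treatment is in fact slightly more explicit than the paper's on two points: you use the tree structure directly to argue that exactly $j$ is added (the paper's appendix proof carries some non-tree case analysis that is unnecessary here), and you spell out why the descendant count is strictly below $m-1$ by invoking the $f\ge 3$ always-aware roots, whereas the paper simply asserts $\sum_l \Delta_l < \pi_0 - 1$.
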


We prove  Lemma \ref{lem:no_bot} in the appendix.
By Lemma \ref{lem:no_bot}, every aware player does not get hurt by propagating the information to all her neighbours,
and thus in what follows, we assume  without loss of generality $s_i(x_i)_j \ge 0$ for any $x_i \ge x_{\min}$ and $j\in NB_i$.
Moreover, in the following sections, we will see that the aware players actually want to maximally propagate the information.
A feasible strategy $s_i$ is called {\em full propagation} (FP for short) if $s_i(x_i)_j = (k-1) x_{\min}$ for all $j$
and $k x_{\min} \le x_i < (k+1)x_{\min}$ with $k \ge 1$.
That is in an FP strategy, a player wants to inform all her neighbours by leaving the maximal reward to them.
A feasible strategy profile $\fs=(s_i)_{i\in N}$ is called FP if every $s_i$ is FP.
Denote by $\fs^*= (s_i^*)_{i\in N}$ the FP strategy profile.

\section{Main Results}
\label{sec:tree}

To ease formulas, we define the following notations.
For each subtree $T$, let
\[
G(k) =  \sum_{j=1}^k d^{j-1} = \frac{d^k - 1}{d-1}
\]
be the number of nodes from depth 0 to depth $k-1$.
Note that when $d \ge 3$ and $k\ge 1$, $G(k) \ge 2k - 1$.
Moreover, for any $k \ge 1$,
\begin{align}
(f-1)G(k+1) &\ge d^{k+1} - 1 > d(d^{k} - 1) \ge \frac{d+1}{d-1}(d^{k} - 1) \nonumber\\
& =  (d+1)G(k) \ge dG(k) + 2k - 1. \label{eq:G(k)>=2k - 1}
\end{align}
Denote by $H \cdot x_{\min} = 1$ the sender's initial reward, where $H > 1$ is an integer.
Then $H$ is the maximal depth that the information can reach.
For any player $i$ in depth $j \ge 0$,
no matter what $i$'s previous players do,
$i$ is not able to receive the information with remaining reward more than $(H - j) \cdot x_{\min}$
since each of $i$'s ancestors needs to withhold at least $x_{\min}$.
Moreover, all the users in and below depth $H$ are not considered as strategic players.
Finally, under FP strategies, the number of aware players is $f\cdot G(H+1)$.

\subsection{Technical Lemmas}
We next introduce two technical lemmas which are crucial to prove the main results.
For a subtree $T$, let $\pi_0(T)$ be the number of players outside of $T$ who are aware of the information.
If $T$ is clear from the context, we write $\pi_0$ for short.

\begin{lemma}\label{lem:tree:descendant}
Consider the sub-game induced on a single subtree $T$, where $r$ is the root player and $g$ is  one of $r$'s descendants.
Assume $r$ receives the information with reward $x_r$ and $k\cdot x_{\min} \le x_r < (k+1)\cdot x_{\min}$ for some $k\ge 1$.
If $d \ge 3$ and $\pi_0 \ge d\cdot G(k) + 2k - 1$,
$r$'s utility is maximized when $g$ plays FP strategy, taking the actions of the others as given.
\end{lemma}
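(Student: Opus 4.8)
The plan is to fix the strategies of every player except $g$ and prove that, among $g$'s feasible actions, full propagation is the one that maximizes $r$'s utility. Let $j^*\in NB_r$ be the unique child of $r$ on the path to $g$, so that $g$ and its whole subtree lie inside the subtree rooted at $j^*$. Because every aware player wins with the same probability $1/|N^*|$, the only things in
\[
u_r = \frac{x_r + \sum_{j\in NB_r}(x_r - s_r(x_r)_j)\,n_{rj}}{|N^*|}
\]
that $g$'s action can move are the count $m := n_{rj^*}$ of aware players in $j^*$'s subtree and, through it, $|N^*| = P + m$; everything else ($x_r$, the committed amounts $s_r(x_r)_j$, and the counts $n_{rj}$ for $j\neq j^*$) is frozen. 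Setting $c^* := x_r - s_r(x_r)_{j^*}$, $P := |N^*|-m$ and $C := x_r + \sum_{j\neq j^*}(x_r - s_r(x_r)_j)\,n_{rj}$, all constants, this is the Möbius function
\[
u_r = \frac{C + c^*m}{P+m}.
\]
First I would observe that full propagation makes $g$ inform all its children while withholding only $x_{\min}$, which makes the aware population below $g$ as large as feasible and hence maximizes $m$; so it is enough to show $u_r$ is increasing in $m$.

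Comparing two values of $m$ (or differentiating), $u_r$ is increasing in $m$ precisely when $c^*P - C > 0$. Expanding $P = \pi_0 + 1 + \sum_{j\neq j^*}n_{rj}$ and $C$, and writing $y_j := s_r(x_r)_j$ so that $c^* - x_r = -y_{j^*}$ and $c^* - (x_r - y_j) = y_j - y_{j^*}$, a one-line rearrangement collapses everything to
\[
c^*P - C = c^*\,\pi_0 - y_{j^*} + \sum_{j\neq j^*}(y_j - y_{j^*})\,n_{rj}.
\]
The whole lemma now reduces to showing this is strictly positive under $\pi_0 \ge d\,G(k) + 2k - 1$.

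The hard part, and the only delicate step, is lower-bounding the possibly-negative sum. The tension is that a sibling branch $j$ can pull the sum down only by carrying many aware players $n_{rj}$, but a large $n_{rj}$ forces a large forwarded reward $y_j$, which in turn shrinks the negative factor $(y_j - y_{j^*})$. I would make this quantitative by noting that forwarding $y_j$ to $j$ lets that branch reach depth at most $\lfloor y_j/x_{\min}\rfloor - 1$, so $n_{rj}\le G(\lfloor y_j/x_{\min}\rfloor)$; together with $y_{j^*}\le x_r - x_{\min} < k\,x_{\min}$ and $c^*\ge x_{\min}$, a bracket-by-bracket optimization of a single summand shows its minimum is $-G(k-1)\,x_{\min}$, attained only in the limit, so the whole sum is at least $-(d-1)G(k-1)\,x_{\min} = -(d^{k-1}-1)\,x_{\min}$. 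Substituting gives
\[
c^*P - C > x_{\min}\bigl(\pi_0 - k - (d^{k-1}-1)\bigr),
\]
which is positive as soon as $\pi_0 > d^{k-1} + k - 1$; and this is implied with room to spare by $\pi_0 \ge d\,G(k) + 2k - 1$, since $d\,G(k)\ge d^{k}\ge 3d^{k-1}$ and $2k-1\ge k$ for $d\ge 3$, $k\ge 1$. I expect the bracket optimization — establishing that the worst single summand is only $-G(k-1)x_{\min}$, rather than the naive $-k\,G(k-1)x_{\min}$ coming from bounding the two factors separately — to be the crux, since the crude bound is already too weak to beat $d\,G(k)+2k-1$ once $k$ is large; here the growth estimates behind \eqref{eq:G(k)>=2k - 1} (in particular $t\,G(k-t)\le G(k-1)$ for $t\ge 1$) are exactly what makes the tight bound go through.
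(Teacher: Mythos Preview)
Your argument is the paper's: write $u_r$ as a M\"obius function of the single variable $m=n_{rj^*}$, reduce to $c^*P-C>0$, and control the possibly negative cross-branch sum by trading the reward $y_j$ forwarded to a sibling against the size $n_{rj}$ that branch can reach --- exactly the route the paper takes through its auxiliary function $f_j(y_j)=(k+1-y_j)G(\lfloor y_j\rfloor)$ and Claim~\ref{claim:f:bar}.

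One slip to fix: a node receiving any amount in $[0,x_{\min})$ is still \emph{aware}, it merely cannot forward. Hence forwarding $y_j$ lets the branch reach depth $\lfloor y_j/x_{\min}\rfloor$ (not $\lfloor y_j/x_{\min}\rfloor-1$), and the correct cap is $n_{rj}\le G(\lfloor y_j/x_{\min}\rfloor+1)$. Rerunning your bracket optimization with this shift gives a per-summand floor of $-G(k)\,x_{\min}$ rather than $-G(k-1)\,x_{\min}$, so the sum is at least $-(d-1)G(k)\,x_{\min}$ and
\[
c^*P-C \;>\; x_{\min}\bigl(\pi_0 - k - (d-1)G(k)\bigr),
\]
which is precisely the paper's final estimate and is still nonnegative under $\pi_0\ge dG(k)+2k-1$ (the slack is $G(k)+k-1\ge 1$). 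Both you and the paper also pass silently over why FP for $g$ actually \emph{maximizes} $m$ when $g$'s descendants play arbitrary fixed strategies; strictly speaking a descendant's aware count need not be monotone in the reward it receives, so this step is only clean once those descendants themselves play FP --- which is exactly how the lemma is invoked in the proof of Theorem~\ref{thm:coordinate}.
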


\begin{proof}
Note that if $x_r < 2x_{\min}$ or the information cannot reach $g$, $r$'s utility does not depend on $g$'s action,
thus the statement is trivially true. 
In the following we assume $x_r \ge 2x_{\min}$. 

Taking the actions of the players (including $r$) except $g$ as given, 
let $\Delta_j$ be the total number of $r$'s referred players from $r$'s child $j$ and $\Delta = \sum_{j\in NB_r} \Delta_j$.
Let $g'$ be $g$'s ancestor who is $r$'s direct child or $g' = g$ when $g$ is $r$'s child.
Thus $g$'s strategy can only change $\Delta_{g'}$, which is the single variable of $r$'s utility.
Formally, $r$'s utility can be written as (assuming $\pi_0$ includes $r$ to simplify notions)
\begin{align*}
u_r(\Delta_{g'}) & = x_r \frac{1}{\pi_0 + \Delta} + \sum_{j \in NB_r} (x_r - s_r(x_r)_j) \frac{\Delta_j}{\pi_0 + \Delta}\\
&  = \frac{x_r + \sum_{j \in NB_r} (x_r - s_r(x_r)_j)  \Delta_j}{\pi_0 + \Delta}.
\end{align*}
Calculating the derivative of $u_r(\Delta_{g'})$, we have
\begin{eqnarray*}
u'_r(\Delta_{g'}) & = &\dfrac{(x_r - s_r(x_r)_{g'})(\pi_0 + \Delta) - (x_r + \sum_{j \in NB_r} (x_r - s_r(x_r)_j)  \Delta_j)}{(\pi_0 + \Delta)^2}  \\
& = & \dfrac{(x_r - s_r(x_r)_{g'})\pi_0 - x_r - \sum_{j \neq g'}  (s_r(x_r)_{g'} - s_r(x_r)_j) \Delta_j}{(\pi_0 + \Delta)^2} \\
& \ge & \dfrac{x_{\min}\pi_0  - x_r - \sum_{j \neq g'}  (x_r- s_r(x_r)_{j})\Delta_j }{(\pi_0 + \Delta)^2},
\end{eqnarray*}
where the inequality is because $s_r(x_r)_j \le x_r - x_{\min}$ for any $j$. 
Let
\begin{eqnarray*}
f_j(y_j) &  =  & (k + 1 - y_{j})G(\lfloor y_j \rfloor) = (k + 1 - y_{j}) \frac{d^{\lfloor y_j \rfloor} -1}{d-1}\\
	   & \le &  (k + 1 - y_{j}) \frac{d^{y_j} -1}{d-1}.
\end{eqnarray*}

\begin{claim} \label{claim:f:bar}
For $d \ge 3$ and $0 \le y_j < k$, $\bar{f}_j (y_j) = (k + 1 - y_{j}) \cdot \frac{d^{y_j} -1}{d-1}$ monotone increases with respect to $y_j$.
\end{claim}
To prove the above claim, it suffices to calculate the derivative of $\bar{f}_j (y_j)$, and we omit the details.
By Claim \ref{claim:f:bar}, $f_j(y_j) \le \bar{f}_j (y_j) \le G(k)$ for any $0 \le y_j < k$.
Thus,
\begin{eqnarray*}
u'_r(\Delta_{g'})
& \ge & \dfrac{x_{\min}\pi_0  - x_r - \sum_{j \neq g'}  (x_r- s_r(x_r)_{j})\Delta_j }{(\pi_0 + \Delta)^2} \\
& \ge & \dfrac{x_{\min}\pi_0  - x_r - G(k) (d-1) x_{\min}}{(\pi_0 + \Delta)^2} \\
& \ge & \dfrac{x_{\min}(\pi_0  - k - 1 - G(k) (d-1) )}{(\pi_0 + \Delta)^2} \ge 0.
\end{eqnarray*}
The last inequality is because $\pi_0 \ge dG(k) + 2k - 1$ and $k \ge 2$.

In conclusion, if $g$ receives the information with remaining reward at least $x_{\min}$, $u'_r(\Delta_{g'}) > 0$,
which means $r$'s utility is maximized when $g$ plays FP strategy, which finishes the proof. \qed
\end{proof}

Lemma \ref{lem:tree:descendant} implies that, if the information is known to a sufficiently large number of players,
for an arbitrary player, her utility is maximized when all her descendants play FP strategies.
Next we show that the other direction of~Lemma~\ref{lem:tree:descendant} also holds: 
for an arbitrary player,
if the information is aware to a sufficiently large number of players and all her descendants play FP strategies,
her utility is maximized when she plays FP strategy.

Fixing a tree $T$, let $i$ be some player in depth $j$ of $T$.
Assume the remaining reward $i$ has received is $kx_{\min} \le x_i < (k+1)x_{\min}$ for some $0 \le k \le H-j-1$, and all her descendants play FP strategies. Given any strategy $s_i(x_i)$, define
$$s'_i(x_i) = \left(\lfloor\frac{s_i(x_i)_1}{x_{\min}}\rfloor \cdot x_{\min}, \cdots, \lfloor\frac{s_i(x_i)_d}{x_{\min}}\rfloor \cdot x_{\min} \right).$$
Note that given her descendants playing FP strategies, $s'_i(x_i)$ brings $i$ utility at least as much as $s_i(x_i)$ does.
The set of all possible  $s'_i(x_i)$ is called {\em reasonable strategies}.
Thus given all $i$'s descendants playing FP strategies, to study $i$'s best response, it suffices to consider reasonable strategies.
For convenience, we use $(d_0, d_1, \cdots, d_k)$ to represent a reasonable strategy,
where $d_l \in [d]$ means $i$ selects $d_l$ children to propagate to next $l$ depths by leaving $(l-1)x_{\min}$ to these children,
and $d_0$ means $i$ selects $d_0$ children to not propagate. 
Thus $\sum_{l = 0}^{k} d_l = d$.
Note that it does not matter which $d_l$ children are selected since all $i$'s children are symmetric.
In the following, we use $u_i((d_0, d_1, \cdots, d_k); x_i)$ to denote $i$'s utility when she receives reward $x_i$
and her action is $(d_0, d_1, \cdots, d_k)$, given all other players adopting FP strategies.

\begin{lemma}\label{lem:tree:induction}
Consider the sub-game on a single subtree $T$, where $r$ is the root player.
Assume $r$ receives the information with reward $x_r$ and $k\cdot x_{\min} \le x_r < (k+1) \cdot x_{\min}$ for some $k\ge 1$.
If $d \ge 3$, $\pi_0 \ge d\cdot G(k) + 2k - 1$, and all $r$'s descendants adopting  FP strategies,
for any reasonable strategy $(d_0, d_1, \cdots, d_k)$, $r$'s utility increases by moving a unit from $0 \le l < k$ to $l+1$.
Formally, if for some $0 \le l <k$ such that $d_l > 0$, then
\[
u_r((d_0, \cdots, d_l, \cdots, d_k); x_r) <  u_r((d_0, \cdots, d_l-1, d_{l+1}+1, \cdots, d_k); x_r).
\]
\end{lemma}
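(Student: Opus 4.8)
The plan is to turn the discrete step from $d_l$ to $d_{l+1}$ into a single scalar comparison, and then control a weighted sum by locating its maximizing term. Write $r$'s utility under a reasonable strategy as $u_r = A/N^*$, where $N^* = \pi_0 + 1 + \sum_m d_m G(m)$ is the total number of aware players (adopting $G(0)=0$, so a non-propagating ``type-$0$'' child contributes nothing) and $A = x_r + \sum_m d_m\,(x_r-(m-1)x_{\min})\,G(m)$; here a type-$m$ child heads an FP subtree of $G(m)$ aware players, and $r$ retains $x_r-(m-1)x_{\min}$ whenever that subtree produces the winner. Moving one child from group $l$ to group $l+1$ changes $N^*$ by $\delta_D = G(l+1)-G(l) = d^l$ and $A$ by $\delta_A = x_r d^l - x_{\min}(l\,d^l + G(l))$. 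Since $\delta_D>0$, the mediant fact gives $(A+\delta_A)/(N^*+\delta_D) > A/N^*$ iff $\delta_A/\delta_D > u_r$; dividing $\delta_A$ by $d^l$, this is exactly $u_r < x_r - x_{\min}\lambda$, where I set $\lambda := l + G(l)/d^l$. Hence it suffices to prove $x_r - u_r > x_{\min}\lambda$.

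First I would compute $x_r - u_r = (x_r\pi_0 + x_{\min}S)/N^*$, where $S := \sum_m d_m(m-1)G(m) \ge 0$ (the cross terms $x_r\sum_m d_m G(m)$ cancel). Using $x_r \ge k x_{\min}$ and dividing by $x_{\min}$, it suffices to show $k\pi_0 + S > \lambda N^*$, which after substituting $N^*$ and writing $\mu := k-\lambda$ rearranges to
\[
\mu\pi_0 > \lambda + \sum_m d_m\, g(m), \qquad \text{where } g(m) := (\lambda+1-m)\,G(m),
\]
since $\sum_m d_m g(m) = \lambda\sum_m d_m G(m) - S$.

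The crux---and the step I expect to be the main obstacle---is bounding $\sum_m d_m g(m)$ sharply; a per-term estimate such as $g(m)\le G(k)$ loses a factor of order $d$ and breaks the argument. Instead I would show that $g$ is maximized exactly at the break-even pair $m\in\{l,l+1\}$. From $g(m+1)-g(m) = (\lambda-m)d^m - G(m)$ one sees $g(m+1)\ge g(m)$ iff $\lambda \ge m + G(m)/d^m =: h(m)$; a one-line check gives $h(m+1)-h(m) = 1 + \tfrac1d - \tfrac{d^m-1}{d^{m+1}} > 1$, so $h$ is strictly increasing and, since $\lambda = h(l)$, we get $g(m+1)\ge g(m) \iff m\le l$. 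Thus $g(l)=g(l+1)=\max_m g(m) = G(l)\bigl(1+G(l)/d^l\bigr)$. Because $G(l)/d^l < 1/(d-1) \le \tfrac12$ for $d\ge3$, this yields $g(l) < \tfrac32 G(l) \le \tfrac32 G(k-1)$ and $\mu = (k-l) - G(l)/d^l > \tfrac12$, while $\sum_m d_m \le d$ gives $\sum_m d_m g(m) \le d\,g(l) < \tfrac32 dG(k-1)$.

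Finally I would combine these bounds with the hypothesis $\pi_0 \ge dG(k) + 2k - 1$: since $\mu>\tfrac12$ we have $\mu\pi_0 > \tfrac12(dG(k)+2k-1)$, and since $\lambda<k$ it remains to verify $\tfrac12(dG(k)+2k-1) \ge k + \tfrac32 dG(k-1)$. This simplifies to $dG(k)-3dG(k-1)\ge1$, i.e. $d^{k+1}-3d^k+d+1 = d^k(d-3)+(d+1) \ge 0$, which holds for every $d\ge3$ (strictly, as $d+1>0$), giving the desired strict inequality. The degenerate case $l=0$ is subsumed: there $\lambda=0$, $g(l)=0$, and the target collapses to $\mu\pi_0 = k\pi_0 > 0$, recovering Lemma~\ref{lem:no_bot}.
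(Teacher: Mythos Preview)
Your argument is correct. Both the paper and you set up the same increment $\delta_D=d^l$, $\delta_A=x_rd^l-x_{\min}(ld^l+G(l))$ and reduce to the cross inequality $N^*\delta_A>A\,\delta_D$, but you diverge in how you control it. The paper lower-bounds $N^*\delta_A$ by $(\tfrac12+\epsilon)N^*x_{\min}d^l$ and separately proves $(1+2\epsilon)N^*x_{\min}>2A$ via the $l$-independent sum bound $\sum_m d_m G(m)(2k-2m+1)<dG(k)$, which it obtains by a continuous relaxation and a derivative computation. You instead divide through by $\delta_D$ to get the clean scalar target $x_r-u_r>x_{\min}\lambda$ with $\lambda=l+G(l)/d^l$, compute $x_r-u_r$ exactly, and reduce to $\mu\pi_0>\lambda+\sum_m d_m g(m)$; the discrete unimodality of $g$ (peaking at $g(l)=g(l+1)$ via the strictly increasing $h$) then gives $\sum_m d_m g(m)\le d\,g(l)<\tfrac32 dG(k-1)$ without any calculus. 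Your route is somewhat more elementary and makes the role of the hypothesis $\pi_0\ge dG(k)+2k-1$ transparent (it is exactly what closes $\tfrac12\pi_0\ge k+\tfrac32 dG(k-1)$), while the paper's $l$-free weight $(2k-2m+1)$ has the virtue of decoupling the sum estimate from the particular~$l$. Two cosmetic remarks: $\sum_m d_m=d$ is an equality, not just~$\le$; and your formula for $h(m+1)-h(m)$ simplifies to $1+d^{-(m+1)}$, which is perhaps the cleaner way to see it exceeds~$1$.
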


\begin{proof}

To simplify our notions, we ignore the index $r$ for the root player, and let $x$ be the propagation reward that the root receives.
When $kx_{\min} \le x < (k+1)x_{\min}$ and $k \ge 1$, the root is able to leave a proper reward to the players in depth 1
so that the information can be reached to at most depth $k$.
When it is convenient, we write $x = (k + \epsilon) x_{\min}$ and $0 \le \epsilon < 1$.
Given any strategy $(d_0, \cdots, d_k)$ such that $\sum_{j = 0}^{k} d_j = d$ and $d_i > 0$ for some $0 \le i < k$,
we compare $u((d_0, \cdots, d_l, \cdots, d_k); x)$ and $u((d_0, \cdots, d_l-1, d_{l+1}+1, \cdots, d_k); x)$.

If $l = 0$, Lemma \ref{lem:tree:induction} degenerates to Lemma \ref{lem:no_bot} restricted to trees,
which is trivially true.
Thus in the following, we assume $l \ge 1$.

If $ x_{\min} \le x < 2x_{\min}$, the remaining reward a player in depth 1 receives is always smaller than $x_{\min}$;
that is there is no way the information can reach depth 2.
Let $1 < d_0 \le d$ and $d_1 = d - d_0$.
It is not hard to check that by notifying one more player,
$$
u((d_0, d_1); x) = \frac{(d_1 + 1)x}{\pi_0 + 1 + d_1} < \frac{(d_1 + 2)x}{\pi_0  + 1 + d_1 + 1} = u((d_0 - 1, d_1 + 1); x),
$$
where the inequality is because $\pi_0 \ge F(1) > 2$.

Next we assume $k \ge 2$. Denote
$$ Q  = x + \sum_{j = 1}^{k} d_j G(j) (x- (j-1) x_{\min}),$$
and
$$ W = \pi_0 + 1 + \sum_{j = 1}^{k} d_j G(j).$$
Thus the utility for $(d_0, \cdots, d_l, \cdots, d_k)$ is
$$
U = u((d_1, \cdots, d_k); x)  = \frac{Q}{W},
$$
and the utility for $(d_0, \cdots, d_l-1, d_{l+1}+1, \cdots, d_k)$ is
\begin{align*}
U' & = u((d_0, \cdots, d_l-1, d_{l+1}+1, \cdots, d_k); x) \\
    & = \frac{Q + G(i+1) (x- i x_{\min}) - G(i) (x- (i-1) x_{\min}) ) }{W + G(i+1) - G(i)} \\
    & = \frac{Q + xd^i  - x_{\min}(iG(i+1) -(i-1)G(i)) }{W + d^i}. \\
\end{align*}

To show $U' > U$, it equivalent to show
\[
\frac{xd^i  - x_{\min}(iG(i+1) -(i-1)G(i)) }{d^i} > \frac{Q}{W},
\]
or
\[
W\left(xd^i  - x_{\min}\left(iG(i+1) -(i-1)G(i)\right)\right) > d^i Q.
\]
Note that
\begin{eqnarray*}
  & &W\left(xd^i  - x_{\min}(iG(i+1) -(i-1)G(i))\right) \\
  & = &    W \left( xd^i   - x_{\min}( G(i+1) - (i-1) d^i) \right) \\
  & > &    W \left( kx_{\min}d^i   - x_{\min}( \frac{d^{i+1}}{d-1} - (i-1) d^i) \right) \\
  & = &  W x_{\min} d^i \left( k +\epsilon  - \frac{d}{d-1} - (i-1) \right) \\
  &\ge&   (\frac{1}{2} + \epsilon)W x_{\min} d^i,
\end{eqnarray*}
where the first inequality is because $x = (k+\epsilon)x_{\min}$ and
$$G(i+1) = \sum_{j = 0}^{i+1}d^j = \frac{d^{i+1}-1}{d-1} < \frac{d^{i+1}}{d-1};$$
the second inequality is because $k \ge i+1$ and $d \ge 3$.
Thus to show Theorem \ref{lem:tree:induction}, it suffices to show $(1+2\epsilon)Wx_{\min} > 2Q$.


\begin{claim}\label{claim:lem2:casek}
$(1+2\epsilon)Wx_{\min} > 2Q $.
\end{claim}
To prove Claim \ref{claim:lem2:casek}, we note that
\begin{eqnarray*}
&&(1+2\epsilon)W - \frac{2Q}{x_{\min}} \\
& = &  (1+2\epsilon)\left(\pi_0 + 1 + \sum_{j = 1}^{k} d_j G(j)\right) - 2\left(k+\epsilon + \sum_{j = 1}^{k} d_j G(j) (k + \epsilon - j + 1)\right)  \\
& \ge  &  \pi_0 - 2k + 1 - \sum_{i = 1}^{k} d_i G(i) (2k - 2i + 1)  \\ \
& > &  \pi_0 + 1 - 2k - dG(k)  \ge  0.
\end{eqnarray*}
The first equation is because $x = (k+\epsilon) x_{\min}$;
the first inequality is because $\epsilon \ge 0$;
the second inequality is because the following Claim \ref{eq:lem2:monotone};
and the last inequality is because $\pi_0 \ge dG(k) + 2k - 1$.

\begin{claim}\label{eq:lem2:monotone}
$\sum_{i = 1}^{k} d_i G(i) (2k - 2i + 1)  < dG(k)$.
\end{claim}
We prove Claim \ref{eq:lem2:monotone} in the appendix. \qed
\end{proof}

Lemma \ref{lem:tree:induction} is essentially a generalization of Lemma \ref{lem:no_bot} to tree structures,
which means if the information is already known to a sufficiently large number of players,
it is always beneficial for a player to make the information reach players in one deeper level.
By induction, we have the following corollary.

\begin{corollary} \label{cor:tree:fp}
Consider the sub-game on a single subtree $T$, where $r$ is the root player.
Assume $r$ receives the information with reward $x_r$ and $k \cdot x_{\min} \le x_r < (k+1)\cdot x_{\min}$ for some $k\ge 1$.
If $d \ge 3$, $\pi_0 \ge dG(k) + 2k - 1$, and all $r$'s descendants adopt  FP strategies,
$r$'s (unique) best strategy is $((k-1)\cdot x_{\min}, \cdots, (k-1)\cdot x_{\min})$.
\end{corollary}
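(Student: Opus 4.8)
The plan is to derive the corollary from Lemma~\ref{lem:tree:induction} by a monotone-improvement argument that pushes any strategy toward the ``deepest'' reasonable strategy $(0,\dots,0,d)$, which is exactly $((k-1)x_{\min},\dots,(k-1)x_{\min})$. First I would recall, as noted right before Lemma~\ref{lem:tree:induction}, that since all of $r$'s descendants play FP it suffices to analyze reasonable strategies: for any feasible $s_r(x_r)$ the rounded-down profile $s'_r(x_r)$ leaves the number of reachable players below each child unchanged (the floor $\lfloor s_r(x_r)_j/x_{\min}\rfloor$ is preserved) while weakly increasing each retained amount $x_r - s_r(x_r)_j$, hence weakly increasing $r$'s utility. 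I will also record that this increase is \emph{strict} whenever some $s_r(x_r)_j$ is not an integer multiple of $x_{\min}$, a fact I will need for uniqueness.

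Second, I would run the improvement argument on reasonable strategies. Encode a reasonable strategy as $(d_0,\dots,d_k)$ with $\sum_{l=0}^k d_l=d$, and define the potential $\Phi(d_0,\dots,d_k)=\sum_{l=0}^k l\,d_l$. Whenever the strategy is not $(0,\dots,0,d)$, there is some $l<k$ with $d_l>0$, so Lemma~\ref{lem:tree:induction} applies (its hypotheses $d\ge 3$, $\pi_0\ge dG(k)+2k-1$, and descendants playing FP are all preserved, since the move changes neither $x_r$, nor $k$, nor $\pi_0$, nor the descendants' behaviour): moving one unit from $d_l$ to $d_{l+1}$ strictly increases $r$'s utility and increases $\Phi$ by exactly $1$. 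Since $\Phi$ is bounded above by $kd$, attained \emph{only} at $(0,\dots,0,d)$, finitely many such strictly improving moves transform any reasonable strategy into $(0,\dots,0,d)$. By transitivity of strict inequality, $u_r((0,\dots,0,d);x_r)>u_r((d_0,\dots,d_k);x_r)$ for every reasonable strategy other than $(0,\dots,0,d)$, so FP is the unique best reasonable strategy.

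Finally, I would lift uniqueness from reasonable to all feasible strategies. By Lemma~\ref{lem:no_bot} we may ignore $\bot$-actions. Take any feasible $s_r(x_r)$ that is not the FP profile. If it is not reasonable, then some coordinate is not a multiple of $x_{\min}$, so by the strict version of the rounding step its utility is strictly below that of its reasonable rounding, which in turn is at most $u_r((0,\dots,0,d);x_r)$; hence it is strictly worse than FP. If it is reasonable but not $(0,\dots,0,d)$, the previous paragraph already gives strict suboptimality. In both cases FP strictly beats it, establishing that $((k-1)x_{\min},\dots,(k-1)x_{\min})$ is $r$'s unique best strategy.

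The step I expect to require the most care is verifying that the hypotheses of Lemma~\ref{lem:tree:induction} genuinely hold at \emph{every} intermediate strategy of the improvement chain rather than only at the starting point --- in particular that the threshold $\pi_0\ge dG(k)+2k-1$ is a condition on the fixed quantities $\pi_0$ and $k$ and is untouched by redistributing the $d_l$'s --- and pinning down the strict-improvement direction of the rounding map cleanly enough to conclude \emph{uniqueness} rather than mere optimality.
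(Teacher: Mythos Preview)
Your proposal is correct and follows essentially the same route as the paper, which derives the corollary from Lemma~\ref{lem:tree:induction} ``by induction'' without further detail. Your potential-function argument is precisely that induction spelled out, and your explicit treatment of the rounding step and of uniqueness (neither of which the paper elaborates) is sound; in particular you are right that the threshold $\pi_0\ge dG(k)+2k-1$ depends only on the fixed data $\pi_0,k$ and is unaffected by redistributing the $d_l$'s.
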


Note that Lemmas \ref{lem:tree:descendant}, \ref{lem:tree:induction} and Corollary \ref{cor:tree:fp} do not only hold for the root players,
but also for any player in a subtree $T$. 

\subsection{Main Results} 

Given Lemma \ref{lem:tree:descendant} and Corollary \ref{cor:tree:fp}, it is not hard to verify that
FP strategy profile $\fs^*$ is a Nash equilibrium,
as for each fixed tree, the information is known to the other $f-1$ trees, and thus the number of aware players is at least
\[
(f-1)G(H+1) \ge (d-1)G(H+1) \ge (d+1)G(H).
\]
However, $\fs^*$ is not the unique Nash equilibrium.
Consider the following strategy profile.
Denote by 0 the root player of an arbitrary subtree $T$ and by $1,\cdots, d$ her children.
Consider the strategy profile $\fs'$: 
for all $x\geq x_{\min}$, $s_0(x) = (0,\cdots, 0)$ and $s_i(x) = (\bot, \cdots, \bot)$ for all $i \in \{1,\cdots, d\}$; 
all the other players in $T$ and all players not in $T$ play arbitrary strategies.
That is by notifying the players in depth 1, the root 0 withholds the complete reward no matter how much the initial reward is,
and the players in depth 1 do not propagate the information at all no matter how much reward the root player leaves for them.
Next, we claim strategy profile $\fs'$ is a Nash equilibrium and for simplicity, we assume $d \ge 4$.
First, for each player not in tree $T$, as there are at least $f-2$ other trees play FP, $\pi_0 \ge (f-2)G(H+1) > dG(k) + 2k -1$ for any $k \ge 0$.
By Corollary \ref{cor:tree:fp}, the best response of them is FP, thus all these players will not deviate.
Second, the root player 0 does not deviate, as all her children do not propagate the information and her best strategy is to withhold all the reward;
Finally, no player in $\{1,\cdots, d\}$ deviates, as the remaining reward for the information is 0.
We can observe that in $\fs'$, all $\{0, 1,\cdots, d\}$ played ``bad'' strategies.
By deviating to FP strategies simultaneously, all of them can improve their utilities,
which means $\fs^*$ is more stable equilibrium than $\fs'$.

We next define {\em connected coalition-proof Nash equilibria on graphs}, which are stronger than an arbitrary Nash equilibrium.
Let $\Gamma = (N, \fS, u)$ be any game with $n$ players, and for any $C\subseteq N$, denote $\fS_C = \times_{i\in C} S_i$.
Let $G = (N, E)$ be a graph defined on players with $(i,j) \in E$ meaning players $i$ and $j$ can communicate with each other directly.
A Nash equilibrium $\fs\in \fS$ is called {\em connected coalition-proof on $G$}
if there is no coalition $C \subseteq N$ with the induced subgraph of $C$ on $G$ being connected
such that by deviating to $\fs'_C$, $u_i(\fs'_C, \fs_{N\setminus C}) \ge u_i(\fs)$ for any $i \in C$
and there is one $j \in C$ such that $u_j(\fs'_C, \fs_{N\setminus C}) > u_j(\fs)$.
It is easy to see that any strong Nash equilibrium is connected coalition-proof on a complete graph
and any Nash equilibrium is connected coalition-proof on an empty graph.
As pointed out by Myerson in \cite{DBLP:journals/mor/Myerson77}, the requirement of connected coalition-proof is practical as
the players in any deviating coalition should be able to communicate. 

\begin{theorem}\label{thm:coordinate}
For $f\ge d\ge 3$, full propagation strategy profile $\fs^*$ is a connected coalition-proof Nash equilibrium on $\cG$.
\end{theorem}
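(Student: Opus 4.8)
The plan is to verify the two defining requirements in turn. First, that $\fs^*$ is a Nash equilibrium. Fix a player $i$ at depth $j$ and let everyone else play FP. Its ancestors being FP, $i$ receives exactly $x_i=(H-j)x_{\min}$ (so $k=H-j$), and all descendants of $i$ play FP. The number of aware players outside $i$'s own subtree is at least $(f-1)G(H+1)$, and by monotonicity of $G$ together with \eqref{eq:G(k)>=2k - 1} we have $(f-1)G(H+1)\ge (f-1)G(k+1)\ge dG(k)+2k-1$; thus the hypothesis of Corollary~\ref{cor:tree:fp} holds and $i$'s unique best response is FP. Since this holds for every player (the lemmas apply to any node of a subtree, per the remark after Corollary~\ref{cor:tree:fp}), $\fs^*$ is a Nash equilibrium.

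\emph{Reducing coalition-proofness to a single sum.} Removing the sender $s$ disconnects the $f$ subtrees, so any connected coalition $C$ lies in one subtree; let $c_0$ be the member of $C$ closest to $s$ (the root of the connected set $C$), at depth $j_0$, and let $T_{c_0}$ be the subtree of $\cG$ rooted at $c_0$. Everything above $c_0$ and everything outside $T_{c_0}$ plays FP, so the reward $x_{c_0}=(H-j_0)x_{\min}$ entering $c_0$ and the number $\Pi_0\ge(f-1)G(H+1)$ of aware players outside $T_{c_0}$ are both fixed, independent of the coalition's choice. I will use the invariant that the players of the subtree rooted at any node $v$ receiving reward $x_v$ earn in total $b_v x_v/|N^*|$, where $b_v$ is the number of aware nodes in that subtree (conditioned on its containing the winner, exactly $x_v$ is retained inside it). Writing $T_{c_0}\setminus C=\bigsqcup_l T_{v_l}$ for the subtrees hanging off $C$ (all playing FP) and $b,b_l$ for the respective aware counts, applying the invariant to $T_{c_0}$ and to each $T_{v_l}$ gives
\[
\sum_{i\in C} u_i(\fs'_C,\fs^*_{N\setminus C})=\frac{b\,x_{c_0}-\sum_l b_l\,x_{v_l}}{\Pi_0+b}.
\]
Any weak Pareto improvement for $C$ would strictly increase this sum, so it suffices to show the sum is maximized by $\fs^*$.

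\emph{Full propagation maximizes the coalition's total utility.} Since an aware member only helps and $b=|C|+\sum_l b_l$, the sum equals $\big(|C|x_{c_0}+\sum_l b_l(x_{c_0}-x_{v_l})\big)/(\Pi_0+|C|+\sum_l b_l)$. Given any reasonable coalition strategy that is not FP, choose a branch whose reward has not been pushed to maximal depth and move one increment $x_{\min}$ one level deeper, which replaces some $b_l=G(y+1)$ by $G(y+2)$ and raises the corresponding $x_{v_l}$ by $x_{\min}$. The effect on the fraction is exactly the numerator/denominator bookkeeping of Lemma~\ref{lem:tree:induction}: the marginal numerator exceeds (current ratio)$\times$(marginal denominator) precisely because $\Pi_0\ge(f-1)G(H+1)\ge dG(k)+2k-1$ dominates the aware counts, so the sum strictly increases. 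Iterating drives every branch to maximal depth, i.e. to $\fs^*$, which is therefore the maximizer; combined with the previous paragraph this rules out any profitable connected co-deviation.

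\emph{Main obstacle.} The delicate part is this last step. The aggregate ``push one level deeper'' move may have to be routed through several coalition members simultaneously, so it perturbs both the numerator and the single global denominator $\Pi_0+b$ at once; I must recast it so that the governing inequality is literally the analogue of Claims~\ref{claim:lem2:casek} and~\ref{eq:lem2:monotone} at the level of the whole coalition rather than a single player, and confirm that $\Pi_0\ge dG(k)+2k-1$ is exactly the slack these estimates consume. I expect that after the $b_vx_v/|N^*|$ rewriting this reduces to the very estimates already established for Lemma~\ref{lem:tree:induction}.
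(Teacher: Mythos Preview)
Your Nash-equilibrium paragraph is correct and coincides with the paper's argument.

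For coalition-proofness you take a genuinely different route. You aim to show the \emph{aggregate} $\sum_{i\in C}u_i$ is strictly maximized at $\fs^*$, via the telescoping identity $\sum_{i\in C}u_i=(b\,x_{c_0}-\sum_l b_l x_{v_l})/|N^*|$ and then a local ``push one level deeper'' move mimicking Lemma~\ref{lem:tree:induction}. The paper does something much shorter: it names a specific loser, namely the root $c_0$ of the coalition, and never looks at the aggregate. The ingredient you are not exploiting is Lemma~\ref{lem:tree:descendant}: for $\pi_0\ge dG(k)+2k-1$, $c_0$'s utility is maximized when any \emph{single} descendant plays FP, regardless of what the other descendants do. Iterating this over the members of $C\setminus\{c_0\}$ one at a time yields
\[
u_{c_0}(\fs'_C,\fs^*_{N\setminus C})\ \le\ u_{c_0}(s'_{c_0},\fs^*_{N\setminus\{c_0\}})\ <\ u_{c_0}(\fs^*),
\]
the last step being Corollary~\ref{cor:tree:fp}. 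No new inequalities are needed.

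The obstacle you flag in your aggregate approach is real and is not ``exactly the bookkeeping of Lemma~\ref{lem:tree:induction}''. In that lemma the competing branches are $r$'s $d$ children, and Claim~\ref{eq:lem2:monotone} bounds their combined pull by $dG(k)$, which is what the hypothesis $\pi_0\ge dG(k)+2k-1$ absorbs. In your coalition sum the competing branches are \emph{all} hanging subtrees off $C$; there can be $\Theta(d\,|C|)$ of them, and the analogue of the sum in Claim~\ref{eq:lem2:monotone} can be of order $kG(k+1)$ rather than $dG(k)$. So the marginal inequality you need is not implied by $\Pi_0\ge dG(k)+2k-1$ as stated; at best you would need to re-derive a bound whose right-hand side scales with $|C|$ and then check that $(f-1)G(H+1)$ still dominates it. That may be doable, but it is additional work, whereas the paper's root-only argument sidesteps the issue entirely.
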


\begin{proof}
For any coalition $C \subseteq N$, denote by $\cG(C)$ the induced subgraph of $C$ in $\cG$.
We first observe that $\cG(C)$ being connected implies $\cG(C)$ being a subtree in some $T$.
Then to prove the theorem, it suffices to show for any deviation $\fs'_C$ from $\fs^*_C$,
there is at least one player whose utility is smaller than the case when all of them play FP.
Let $r$ be the root of $\cG(C)$.
Without loss of generality, we reorder the players in $C\setminus \{r\}$ by $\{1,2, \cdots, c\}$ where $c = |C\setminus \{r\}|$.

Given the other players not in $T$ play FP strategies, the information will be known to at least $\pi_0$ players, and when $d \ge 3$,
by Equation \ref{eq:G(k)>=2k - 1},
$$\pi_0 \ge (f-1)G(H+1) > dG(H) + 2H -1.$$

If $r$ is in depth $H-1$, all players in $C\setminus \{r\}$ are dummy, whose actions do not affect $r$'s utility.
Thus by Lemma \ref{lem:no_bot}, any deviation from $\fs^*_C$ will strictly decreases $r$'s utility.
In the following, we assume $r$ is above depth $ H-1$.

Since $\pi_0 \ge dG(H) + 2H -1$, by Lemma \ref{lem:tree:descendant},
$$
u_r(\fs'_{C}, \fs^*_{N \setminus C}) \le u_r(\fs'_{C\setminus \{1\}}, \fs^*_{\{1\} \cup (N \setminus C)}).
$$
That is by changing player $1$'s strategy to FP, $r$'s utility can only increase.
We continue this procedure for player $i = 2, \cdots, c$ by changing her strategy from $s_i$ to $s^*_i$,
and we have
\begin{eqnarray*}
u_r(\fs'_{\{r, i, \cdots, c\}}, \fs^*_{\{1,\cdots, i-1\} \cup N \setminus C}) \le u_r(\fs'_{\{r, i +1, \cdots, c\}}, \fs^*_{\{1,\cdots, i\} \cup N \setminus C}).
\end{eqnarray*}
Eventually, we obtain
$$
u_r(\fs'_{C}, \fs^*_{N \setminus C}) \le u_r(\fs'_{r}, \fs^*_{N\setminus \{r\}}).
$$
By Corollary \ref{cor:tree:fp}, when all $i$'s descendants play FP strategies,
$$
u_r(\fs'_{r}, \fs^*_{N\setminus \{r\}}) < u_r(\fs^*).
$$
Thus, $u_r(\fs'_{C}, \fs^*_{N \setminus C}) <  u_r(\fs^*)$, which finishes the proof.
\qed
\end{proof}

By Theorem \ref{thm:coordinate},
we have shown that $\fs^*$ is more stable than an ordinary Nash equilibrium.
But $\fs^*$ is not a strong Nash equilibrium, because all the $f$ root players can form a deviating coalition such that none of them
propagates the information, which brings each root player utility $\frac{1}{f}$.
However, as we have seen in the introduction, no propagation of these $f$ players is not an equilibrium.

Next we show that $\fs^*$ survives in any possible order of elimination of dominated strategies,
and uniquely survives in an almost monotonic order of elimination of dominated strategies,
which surpasses the result in \cite{babaioff2012bitcoin}.
We call an elimination order  {\em monotonic} if for any player $i$ and any two eliminated strategies $s_i$ and $s'_i$,
$\min_j s_i(x^*_i)_{j} < \min_j s'_i(x^*_i)_{j}$ implies that $s_i$ is not eliminated after $s'_i$,
where $x^*_i$ is the minimum $x_i$ such that $\min_j s_i(x_i)_{j} \neq \min_j s'_i(x_i)_{j}$.
Assume $\bot < 0$.
We call an elimination order  {\em almost monotonic} if the condition is relaxed to
$s_i(x^*_i)_{j^*} < s'_i(x^*_i)_{j^*} + x_{\min}$ implying $s_i$  is not eliminated after $s'_j$.


\begin{theorem} \label{thm:coordinate:elimination}
For $f > d\ge 3$, $\fs^*$ survives in any possible order of elimination of dominated strategies.
Moreover, $\fs^*$ is the unique Nash equilibrium survives in an almost monotonic order of elimination of dominated strategies.
\end{theorem}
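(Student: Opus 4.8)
The plan is to prove the two assertions separately, both resting on the fact that full propagation is the \emph{unique} best response whenever a player's descendants play FP and enough outside players are aware (Corollary \ref{cor:tree:fp}). This hypothesis is automatic here: if all other subtrees propagate, then for any player at depth $j$ the number of outside aware players is $\pi_0 \ge (f-1)G(H+1) > dG(H)+2H-1 \ge dG(k)+2k-1$ for $k=H-j$, using $f>d$ and Inequality \eqref{eq:G(k)>=2k - 1}.

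For the first assertion (survival under any order) I would argue by minimal counterexample. Suppose some FP strategy is eliminated and consider the first step at which this happens, say $s_i^*$ of player $i$. At that moment every other $s_j^*$ is still available, so $\fs^*_{-i}$ is an available profile, and against it $s_i^*$ plays the unique optimal action at the reward $i$ receives (Corollary \ref{cor:tree:fp}). Hence any $s_i'$ that weakly dominates $s_i^*$ must tie it against $\fs^*_{-i}$, forcing $s_i'$ to agree with $s_i^*$ on the realized reward; a strict gain can therefore only occur at some off-path reward $x_0$. I rule this out as follows: take an available profile delivering $x_0$ to $i$ at which $s_i'$ beats $s_i^*$, then replace all of $i$'s descendants and all outside players by FP. Those FP strategies are available at the first-elimination step, so the new profile is available; it leaves $i$'s received reward $x_0$ unchanged while forcing descendants to FP and making $\pi_0$ large. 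By Corollary \ref{cor:tree:fp}, $s_i^*$ is then strictly better than $s_i'$ there (their actions at $x_0$ differ), contradicting weak domination. Thus no FP strategy is ever eliminated, in any order.

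For the second assertion I would exhibit a concrete order and verify it is almost monotone. After first discarding every $\bot$-using strategy, which is weakly dominated by its propagate-with-$0$ counterpart via Lemma \ref{lem:no_bot}, I eliminate bottom-up in depth: depth $H-1$, then $H-2$, up to the roots. When depth $j$ is processed, the inductive hypothesis is that every deeper player is already pinned to FP, so \emph{every} remaining opponent profile has $i$'s descendants playing FP; Corollary \ref{cor:tree:fp} then applies at all such profiles and shows $s_i^*$ plays the uniquely optimal action at whatever reward $i$ receives. Consequently $s_i^*$ weakly dominates each other strategy of $i$ (it ties where the actions agree and wins strictly at any reachable reward level $k x_{\min}$, $1\le k\le H-j$, where they differ; such a profile exists because the not-yet-pinned ancestors still have available strategies delivering each such level to $i$). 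I eliminate these strategies of $i$ in increasing order of their left-reward at the first point of difference; since the almost-monotone condition compares only strategies of a single player and the maximal-leaving strategy $s_i^*$ is never eliminated (first assertion), the order is almost monotone. Carrying the induction to the roots pins every player to FP on its reachable rewards, leaving FP as the unique survivor up to payoff-irrelevant off-path choices; since those off-path actions never affect anyone's payoff, FP is the unique surviving Nash equilibrium.

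The hard part is the coordination inside the second assertion: to invoke Corollary \ref{cor:tree:fp} when eliminating a player's strategies I must have already pinned all its descendants to FP, yet must keep its ancestors unpinned so that every reward level remains reachable and the strict inequalities needed for weak domination actually hold. The bottom-up depth ordering is precisely what reconciles these opposing requirements, and the crucial observation is that once the descendants are pinned, all available opponent profiles have them at FP, so the corollary holds uniformly rather than only on nice profiles. A secondary technical nuisance, shared with the first assertion, is that strategies are full mappings from rewards to actions, so (non-)domination must be reasoned at off-path rewards too; the $x_{\min}$ slack in the almost-monotone definition is what absorbs the interval-boundary cases between adjacent reward levels and is the reason a strictly monotone order need not suffice.
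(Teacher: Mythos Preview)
Your argument for the first assertion is sound and in fact handles the off-path rewards more carefully than the paper, which simply invokes Corollary~\ref{cor:tree:fp} against $\fs^*_{-i}$.

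The second assertion, however, contains a real gap. Your bottom-up-by-depth order pins the descendants of a player $i$ to FP, but it does \emph{not} pin the shallower players in the other subtrees, and it is precisely those players that produce the $\pi_0$ you need. Corollary~\ref{cor:tree:fp} requires $\pi_0 \ge dG(k)+2k-1$; when you process a root $r$ (depth $0$, so $k=H$) the other $f-1$ roots are still free to leave $0$ to every child, and after only the $\bot$-elimination step the guaranteed outside awareness is merely $(f-1)(1+d)$. For $d=3$, $f=4$, $H=5$ one checks directly that against such a profile FP gives $r$ utility $125x_{\min}/133<x_{\min}$ while leaving $0$ to every child gives $20x_{\min}/16>x_{\min}$. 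Hence $s_r^*$ does \emph{not} weakly dominate the ``leave $0$'' strategy at that stage, and your elimination stalls. The sentence ``this hypothesis is automatic here: if all other subtrees propagate \ldots'' begs exactly the question: in the bottom-up order the other subtrees need not propagate.

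The paper avoids this by eliminating not by depth but by \emph{reward interval}: in Round~$l$ it pins, for every player simultaneously, the action taken when the received reward lies in $[lx_{\min},(l+1)x_{\min})$. The point is that after Round~$l$ every remaining strategy leaves at least $(l-1)x_{\min}$ whenever it receives at least $lx_{\min}$; cascading this from the roots downward forces awareness through depth~$l$ in every subtree, so $\pi_0\ge fG(l+1)\ge (d+1)G(l+1)\ge dG(l+1)+2l+1$, which is exactly the hypothesis needed for the next round. In other words, the growth of $\pi_0$ and the elimination proceed in lockstep, and that coupling is what your depth-based order loses. If you want to rescue a layered scheme, you must interleave it with something that forces the roots (and their shallow descendants) to leave progressively larger rewards before you attempt to apply Corollary~\ref{cor:tree:fp} at large~$k$.
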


Theorem \ref{thm:coordinate:elimination} is proved in the appendix.
Note that all our results in this section hold as long as each player has at least $d$ children, where $d \ge 3$.
Actually, similar results also hold for the case of $d = 1, 2$, but the players may withhold multiples of $x_{\min}$.

\paragraph{Remark.}
Theorem \ref{thm:coordinate:elimination} is similar to the result in \cite{babaioff2012bitcoin}.
They designed a hybrid rewarding scheme, which combines two nearly-uniform algorithms, that is Sybil-proof.
Each player in a nearly-uniform algorithm
A nearly-uniform algorithm specifies a maximal length $H$ of rewarding path and rewards winning player in a chain of length $h$ a reward of $1 + \beta \cdot (H - h + 1)$.
All the players between the sender and the winner are rewarded $\beta$.
Though  $\beta$ plays a similar role with our $x_{\min}$, there are several differences.
One of the main differences is that, by the design of the free market, the players can arbitrarily decide her own charge as long as the charge is at least of $x_{\min}$,
which is not restricted to be integer multiples of $x_{\min}$ (i.e., Sybil copies).
A second difference is that we do not combine two different schemes.
The advantage of the free market design is that  the system does not need to (carefully) specify who gets how much,
and the players themselves are already motivated to maximally propagate the information,
which is also the main take-home message of the current paper.

\section{A Class of Non-tree Networks}
\label{sec:nontree}

In this section, we investigate the extent to which our results for trees can be extended to general networks.
In a non-tree network, a player may get the information from multiple neighbours and she will claim the one who leaves the highest reward to her,
where tie is broken arbitrarily but consistently.
Again let $\cG = (V, E)$ be an arbitrary network and $s \in V$ be the initial sender.
We first introduce the notions of  {\em good friends} and {\em best friends}.
For two players $i$ and $j$, $i$ is $j$'s best friend if (1) $i$ and $j$ are connected, and
(2) every shortest path from sender $s$ to $j$ passes $i$.
If $i$ is $j$'s best friend, then $j$ is called $i$'s good friend.
Note that each player can have at most one best friend but multiple good friends.
For example, in Fig \ref{fig:non-tree:friends}(a), $a,b$, and $d$ are the best friends of $c,e$, and $g$, respectively.
However, $d$ and $f$ do not have any best friend as each of them has two disjoint shortest paths to $s$.
Let $\cT=(V,E')$ be the subgraph of $\cG$, where $E'\subseteq E$ and $(i,j)\in E'$ if $i$ is $j$'s good or best friend.
Note that $\cT$ is a spanning forest of $\cG$, and the root of each tree is either $s$ or a player who does not have best friend.
As an example, the solid lines in Fig \ref{fig:non-tree:friends}(a) form a {\em good-friendship} graph.

\begin{figure}
    \centering

\subfigure[Good-Friendship Graph $\cT$.]{
\begin{minipage}[t]{0.5\linewidth}
\centering
\includegraphics[width=50mm]{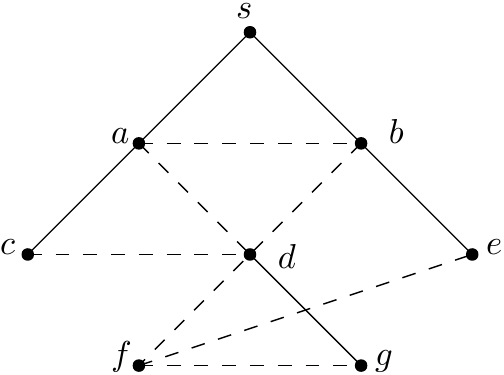}
\end{minipage}%
}%
\subfigure[Information Propagation Tree $T^*$.]{
\begin{minipage}[t]{0.5\linewidth}
\centering
\includegraphics[width=50mm]{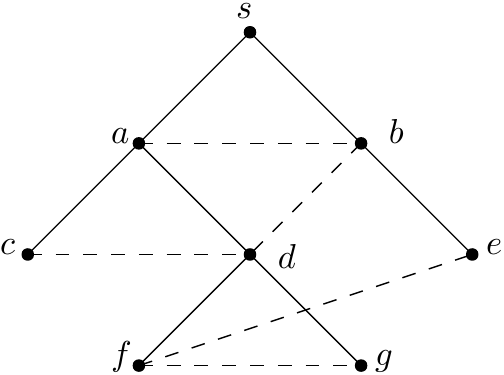}
\end{minipage}%
}%

    \caption{Illustration of Good-Friendship Graph and Information Propagation Tree.}
    \label{fig:non-tree:friends}
\end{figure}


Next we prove that if every player has at least three good friends, then full propagation is a Nash equilibrium.
Moreover, this equilibrium is robust to collective deviations of good friends.
Note that the case of $d$-ary tree with $d\ge 3$ in Section \ref{sec:tree} is a special case of this situation where $\cG = \cT$.

\begin{theorem}
\label{thm:nontree}
If every player has at least 3 good friends, then FP strategies $\fs^*$ is a Nash equilibrium.
Moreover, it is connected coalition-proof on $\cT$.
\end{theorem}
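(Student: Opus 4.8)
The plan is to reduce the non-tree problem to the tree analysis already developed, by first pinning down what full propagation actually does on $\cG$. First I would show that when every player plays $\fs^*$, the realized propagation coincides with a tree $T^*$ that contains every edge of $\cT$. Under FP a player at shortest-path depth $t$ receives exactly $(H-t)x_{\min}$ and can leave at most $(H-t-1)x_{\min}$ to a neighbour, a quantity strictly decreasing in $t$; hence each player claims the neighbour of smallest depth, which is her best friend whenever one exists, with ties among equal-depth shortest-path neighbours broken consistently. This yields a propagation tree $T^*$ in which each player's depth equals its distance from $s$ and, crucially, every player's good friends all claim her, so every internal node of $T^*$ has at least three children. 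This is the property that plays the role of $d\ge 3$ in Section \ref{sec:tree}.

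The second step is to transport Lemma \ref{lem:tree:descendant} and Corollary \ref{cor:tree:fp} from complete $d$-ary trees to $T^*$. Their proofs use only (i) a lower bound $\pi_0$ on the number of aware players lying outside the subtree under consideration and (ii) control on the number of descendants a player can reach within a given number of levels. I would restate these lemmas with $G(k)$ replaced by the actual level-by-level descendant counts in $T^*$, and check that the sign of the derivative and the direction of the key inequality are preserved once every node has at least three children. The min-degree-three assumption is exactly what reinstates the growth estimates (the analogues of $G(k)\ge 2k-1$ and of the monotonicity claims) on which the two tree lemmas rest.

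With the transported lemmas in hand, the Nash part is: fix $\fs^*$ for everyone but $i$, and argue $i$'s deviation can only reshape awareness inside her $T^*$-subtree. I would verify $i$ cannot profit by informing a non-good-friend $j$: if $j$ is not $i$'s good friend then $i$ is not on every shortest path to $j$, so $j$ already has a provider at depth at most $i$'s that leaves at least as much reward as $i$ possibly can, whence $j$ never switches to $i$. Thus $i$ is effectively constrained to the tree game on her own subtree, and the transported Corollary \ref{cor:tree:fp} gives that FP is her unique best response, so $\fs^*$ is a Nash equilibrium. For connected coalition-proofness on $\cT$, a connected coalition $C$ induces a subtree of $T^*$ with some root $r$; I would copy the peeling argument of Theorem \ref{thm:coordinate}, switching the members of $C\setminus\{r\}$ to FP one at a time (each step only raises $u_r$ by the transported Lemma \ref{lem:tree:descendant}) and concluding with the transported Corollary that $r$ is strictly worse off than under $\fs^*$.

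The main obstacle is the population bound underlying the second step. On a complete $d$-ary tree the siblings supply $\pi_0 \ge (f-1)G(H+1)$, comfortably beating the threshold $dG(k)+2k-1$; on the irregular $T^*$ a player's subtree may be far larger than a complete ternary tree, so I must show the aware players outside it still dominate the (now larger) bound on her realizable referrals. I expect to handle this by bounding both sides in terms of the true subtree sizes in $T^*$ and exploiting that minimum degree three forces the complementary part of the forest to grow at least as fast, but making this balance tight — together with the bookkeeping for deviations that cause a good friend to be re-claimed by a cousin when $i$ withholds aggressively — is where the real work lies.
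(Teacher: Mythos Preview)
Your plan is the same as the paper's: build the propagation tree $T^*$ under $\fs^*$, note that every player's good friends must claim her so every internal node of $T^*$ has at least three children, reduce each player's best-response problem to the tree analysis of Section~\ref{sec:tree}, and obtain connected coalition-proofness on $\cT$ by the root-peeling argument of Theorem~\ref{thm:coordinate}. The paper is in fact terser on the transport step you worry about; it simply invokes the remark at the end of Section~\ref{sec:tree} that the tree lemmas extend to any tree in which every node has at least three children, so the $\pi_0$ obstacle you single out is asserted rather than reproved there.

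One correction on the non-good-friends: your claim that such a $j$ ``never switches to $i$'' is not generally true, since when $j$ lies one level below $i$ with two shortest paths of equal length, the consistent tie-breaking may favour $i$, so under $\fs^*$ such a $j$ can claim $i$ and contribute to her referring utility. The paper argues in the reverse direction instead: if $i$ withholds from any $j\in S_{22}$, that $j$ is still informed via the alternative shortest path and simply stops claiming $i$, so $i$ can only lose referring utility; hence FP toward $S_{22}$ is (weakly) optimal regardless of which way the tie goes. This phrasing also sidesteps the ``re-claimed by a cousin'' bookkeeping you anticipate.
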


\begin{proof}
If every player has at least 3 good friends, then every node in $\cT$ has at least 3 children in its corresponding tree.
Let $T^*$ be the corresponding information propagation tree under $\fs^*$, where $i$ is connected to $j$ if $j$ claims $i$ as the ancestor.
Essentially, $T^*$ connects all the trees in $\cT$ as shown in Fig \ref{fig:non-tree:friends}(b).
We first prove $\fs^*$ is a Nash equilibrium. Suppose, for the sake of contradiction, some player $i$ wants to deviate from $\fs^*$.
We partition $i$'s neighbours into two sets $S_1$ and $S_2$,
where $S_1$ contains the players whose distance to $s$ is at most $i$'s distance to $s$,
and $S_2$ contains the players whose distance to $s$ is longer than $i$'s distance to $s$.
Note that player $i$'s action does not affect players in $S_1$ and their descendants in $T^*$.
We further partition $S_2$ into $S_{21}$ and $S_{22}$, where $S_{21}$ contains all $i$'s good friends and $S_{22} = S_2 \setminus S_{21}$.
If $i$ does not fully propagate to players in $S_{22}$, $i$ loses all the referring utility from $S_{22}$ because they can still be informed via other paths, which decreases $i$'s utility.
By the condition of the lemma, $|S_{21}| \ge 3$, and only players in $S_{21}$ are connected with $i$ in $\cT$.
Moreover, under $\fs^*$, $i$ will be claimed as the ancestor with higher priority by all her descendants in $\cT$.
Then our problem degenerates to the case of trees in the previous section, and  by Lemma \ref{lem:tree:induction},  $i$'s utility is maximized by FP, which means $\fs^*$ is a Nash equilibrium.
Moreover, the reason why $\fs^*$ is connected coalition-proof on $\cT$ is in the same with Theorem \ref{thm:coordinate}:
Any connected subgraph in $\cT$ forms a subtree with a root player $i$, whose utility decreases by deviating from $\fs^*$. \qed
\end{proof}

Here we note that $\fs^*$ may not be connected coalition-proof on the original network $\cG$.
To see this, if all senders' direct neighbours are connected with each other,
they can form a connected coalition and do not propagate the information to others, which brings each of them higher utility than full propagation.

\section{Experiments}
\label{sec:experiments}
Finally, we conduct experiments to confirm the validity of the free market design in random networks. 
We first introduce the parameters in our experiments.
Given a parameter $d$, there are $n$ players and each player is randomly connected to $\frac{d}{2}$ other players,
so that the expected degree of each player is $d$.
We study the utility of a fixed player and the sender is randomly selected from the other players.
Fix all other players' strategies to be full propagation.
Set the initial reward to $1$ and $x_{\min} = \frac{1}{H}$. 
In each experiment, we randomly generate $K$ networks. 
For simplicity, we only consider the strategies that withhold integral multiples of $x_{\min}$,
and calculate the expected utility of the studied player for each strategy.

\begin{figure}[ht!]
\centering
\includegraphics[width=90mm]{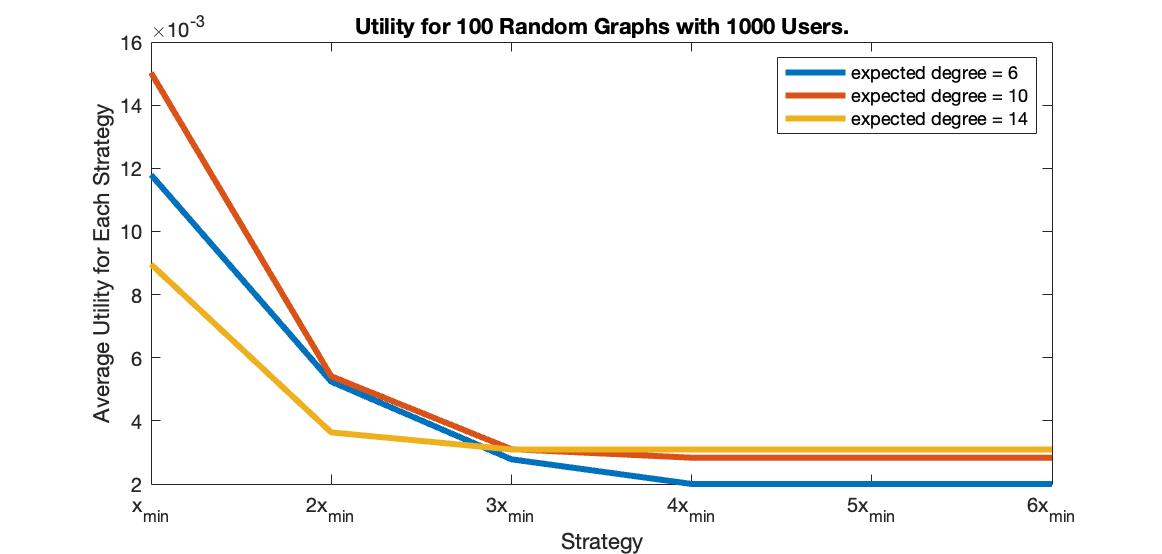}
\vspace{-3mm}
\caption{When all players have the same expected degree.}
\label{fig:exp1}
\end{figure}

In Fig \ref{fig:exp1}, we set $n = 1000$ and $H=6$.
For each $d = 6/10/14$, we randomly construct $K=100$ networks.
We observe that full propagation (by withholding $x_{\min}$)
brings a significantly higher utility on average than the other strategies (by withholding $kx_{\min}$ for $k >1$).
Then we revise the experiments by making the studied player more powerful or less powerful than the others,
where a powerful player has higher degree than the others and a powerless player has lower degree than the others.
In Fig \ref{fig:exp2}, the degree for the powerful player is set to $2d$ and for the powerless player it is set to be $\frac{d}{2}$.
Similarly, we observe that in both cases, full propagation brings significantly higher utility than the other strategies.

\begin{figure}[ht!]
\centering
\includegraphics[width=100mm]{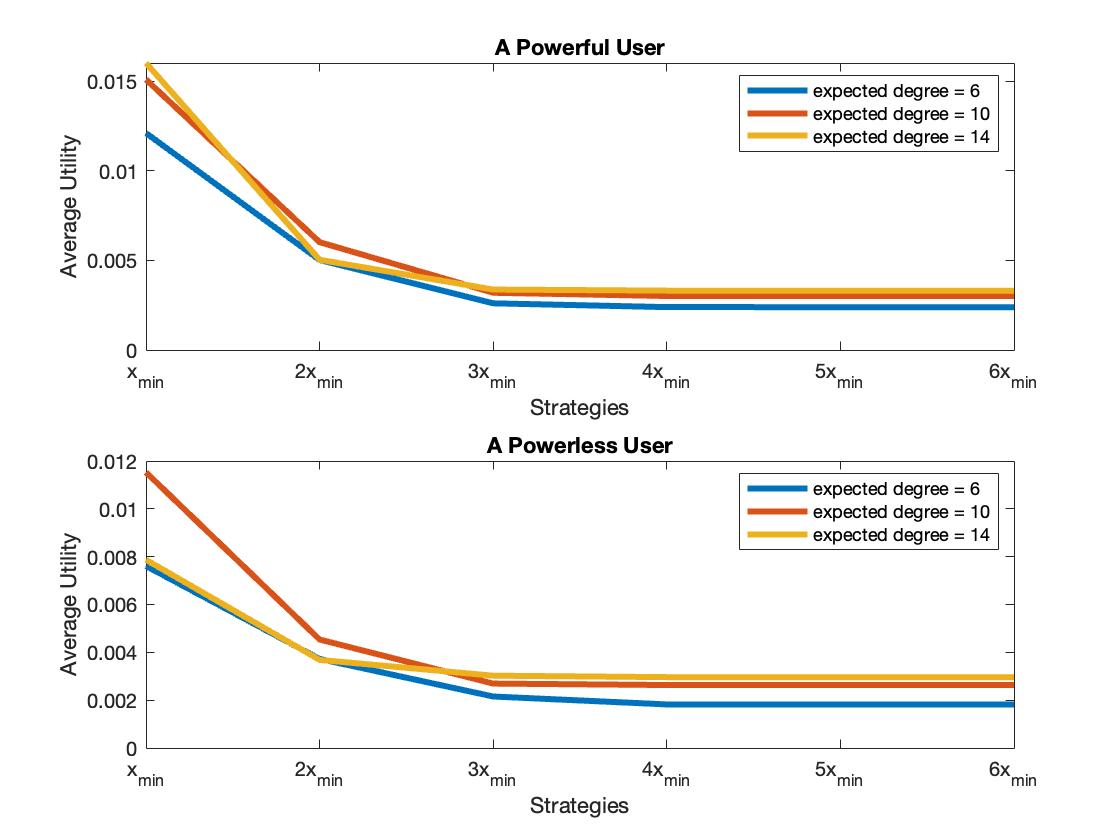}
\vspace{-3mm}
\caption{When the player is powerful or powerless.}
\label{fig:exp2}
\end{figure}


Note that full propagation is not always optimal.
If a player $i$ exclusively controls some players (i.e., the sender's information can only be reached to them via $i$)
and these players form a complete graph, the optimal local strategy for $i$ is to leave 0 to all of them instead of full propagation.
Thus, if the network structure is known to players, the players can compute their optimal propagating strategies, which may not be full propagation.
However,  such ``bad'' networks barely happen in random networks and never happen in reality.
To further avoid security threats, such as in a blockchain protocol, the players can be randomly re-connected periodically so that it is not beneficial for players to spend effort on learning network structures any more.

\section{Conclusion and Future Directions}
In this work, we design a free market with lotteries to incentivize players to maximally propagate information to their friends.
For trees and a large class of non-tree networks, we prove that full propagation is robust against connected coalitions of players.
Particularly, full propagation in tree networks uniquely survives in an interval-based monotone iterative elimination of dominated strategies.
For future directions, from a theoretical perspective, it is intriguing to analyze players' behaviours in the free market within arbitrary or random networks
and consider non-uniform lotteries. 
For example, if the property of every node having 3 good friends holds for certain random graphs, then Theorem \ref{thm:nontree}
holds for such graphs as well.
In addition, in the current market only 
the final winner and the players on the propagation path reaching her get rewarded.
It would be interesting to see if there is a way to select and reward more than one winners and corresponding propagation paths.
In a blockchain, this could correspond to multiple block proposers whose blocks are not eventually finalized,
and may help reducing the players' risk, so as to encourage more to participate in a risk-avert model.
Moreover, it is interesting and challenging to design new reward schemes to incentivize information propagation in more complex networks.
Finally, from a practical perspective, experiments on synthetic and real-world data for complex networks may
reveal important insights on the behavior of a free market in such networks, further enabling market design for them.

\appendix

\section*{Appendix}

\section{Missing Proofs}
\subsection{Proof of Lemma \ref{lem:no_bot}}

\noindent\textbf{Lemma \ref{lem:no_bot}} (restated){\bf .} {\em
For any player $i$,  let $\fs_{-i}$ be $-i$'s strategy profile such that the remaining reward $x_i$ to $i$ is at least  $x_{\min}$.
Let $s_i(x_i) = (z_1, \cdots, z_j, \cdots, z_{|NB_i|})$ be a strategy with $z_j = \bot$ for some $j$
and $s'_i(x_i) = (z_1, \cdots, z'_j, \cdots, z_{|NB_i|})$ be a new strategy by changing $j$'s action from $\bot$ to $0$.
Then $u_i(s_i, \fs_{-i}) < u_i(s'_i, \fs_{-i})$.
}


\begin{proof}
We first consider player $i$'s utility under strategy profile $(s_i, \fs_{-i})$.
For $l\in [|NB_i|]$, let $\Delta_l \ge 0$ be the number of players $i$ successfully refers through neighbour $l$
and $\pi_0$ be the total number of players who are aware of the information.
Then $i$'s utility is
$$ u_i(s_i, \fs_{-i}) = \frac{x_i + \sum_{z_l \neq \bot} (x_i - z_l)\Delta_l}{\pi_0}.$$

Now consider $i$'s utility under strategy profile $(s'_i, \fs_{-i})$.
By changing $z_j = \bot$ to $z_j = 0$, $j$ may or may not claim $i$ as her ancestor.
If not, it means that $j$ is already aware of the information, but someone else leaves a higher utility for her, thus
$i$'s utility does not change, i.e., $u_i(s'_i, \fs_{-i}) = u_i(s_i, \fs_{-i})$.

If $j$ claims $i$ as her ancestor, there are two cases:
(1) $j$ is already aware of the information, but $i$ is on the shortest path or the tie breaking rule is in favour of $j$, thus $s'_i$ does not increase $\pi_0$;
(2) $j$ is not aware of the information, thus $s'_i$ increase $\pi_0$ by 1.
In any case, $i$'s utility is at least
$$u_i(s'_i, \fs_{-i}) = \frac{2x_i + \sum_{z_l \neq \bot} (x_i - z_l)\Delta_l}{\pi_0 + 1}.$$
To show $u_i(s'_i, \fs_{-i}) \ge u_i(s_i, \fs_{-i})$, it suffices to see the following inequalities.
\begin{eqnarray*}
&& \frac{2x_i + \sum_{z_l \neq \bot} (x_i - z_l)\Delta_l}{\pi_0 + 1} -
\frac{x_i + \sum_{z_l \neq \bot} (x_i - z_l)\Delta_l}{\pi_0} \\
& = & \frac{1}{\pi_0(\pi_0 + 1)} \bigg(2\pi_0 x_i + \pi_0 \sum_{z_l \neq \bot} (x_i - z_l)\Delta_l  \\
&& - (\pi_0 + 1)x_i  - (\pi_0 +1)\sum_{z_l \neq \bot} (x_i - z_l)\Delta_l \bigg) \\
& = & \frac{(\pi_0-1) x_i - \sum_{z_l \neq \bot} (x_i - z_l)\Delta_l}{\pi_0(\pi_0 + 1)} \\
& > & \frac{(\pi_0-1) x_i - (\pi_0 - 1) x_i}{\pi_0(\pi_0 + 1)} = 0.
\end{eqnarray*}
The inequality above is because $\sum_l \Delta_l < \pi_0 - 1$ and $x_i - z_l\le x_i$.
Thus $u_i(s'_i, \fs_{-i}) \ge u_i(s_i, \fs_{-i})$ and we complete the proof.\qed
\end{proof}

\subsection{Proof of Claim \ref{eq:lem2:monotone}}

\noindent\textbf{Claim \ref{eq:lem2:monotone}} (restated){\bf .}  {\em
$\sum_{i = 1}^{k} d_i G(i) (2k - 2i + 1)  < dG(k)$.
}

\begin{proof}
Let
\begin{eqnarray*}
f(y) & = &  G(\lfloor y \rfloor) (2k - 2y + 1) = \frac{d^{\lfloor y \rfloor} - 1}{d-1} (2k - 2y + 1)\\
& \le & \frac{d^{y} - 1}{d-1} (2k - 2y + 1) = \bar{f}(y).
\end{eqnarray*}
Note that as $\sum_{i = 1}^{k} d_i = d$, to prove the claim,
we only need to show $\bar{f}(y) < G(k) = \bar{f}(k)$.
It is not hard to see when $y \le k-1$ and $d \ge 3$,
$$
\bar{f}'(y) =  \frac{d^y}{d-1} \left( (2k-2y+1) \ln{d} -2   \right) + \frac{2}{d-1} > 0.
$$
That is $\bar{f}(y) < \bar{f}(k-1)$ for all $y < k-1$.
Then we are left to compare $\bar{f}(k)$ and $\bar{f}(k-1)$,
$$
\bar{f}(k) =  \frac{d^k - 1}{d-1} >  \frac{d\cdot d^{k-1} - 3}{d-1} \ge  3\frac{d^{k-1} - 1}{d-1} = \bar{f}(k-1),
$$
where the first inequality is also because of $d \ge 3$.
This finishes the proof of Claim \ref{eq:lem2:monotone}.
\qed
\end{proof}

\subsection{Proof of Theorem \ref{thm:coordinate:elimination}}
\label{sec:tree:descendant}

\noindent\textbf{Theorem \ref{thm:coordinate:elimination}} (restated){\bf .} {\em
For $f > d\ge 3$, $\fs^*$ survives in any possible order of elimination of dominated strategies.
Moreover, $\fs^*$ is the unique Nash equilibrium survives in an almost monotonic order of elimination of dominated strategies.
}

\begin{proof}
We prove the first part via contradiction. 
For any order of elimination of dominated strategies, consider the first time that some player $i$'s FP strategy $s_i^*$ is eliminated.
Note that $s_i^*$ being dominated by $s_i'$ means for all $\fs_{-i}$, $s_i'$ brings utility as least as much as $s_i^*$.
Particularly, $u_i(s_i^*, \fs_{-i}^*) \le u_i(s_i', \fs_{-i}^*)$.
However, given all other players play FP strategies, the information will be known to least $(f-1)G(H+1)$ players.
Thus, by Corollary \ref{cor:tree:fp}, $u_i(s_i^*, \fs_{-i}^*) > u_i(s_i', \fs_{-i}^*)$, where we reach the contradiction.
Next, we prove the second part.

For every player $i$, when the reward $x_i$ is clear, we denote by $s_i(x_i) = (z_1,\cdots,z_d)$. 
If $x_i < x_{\min}$, it is not possible for $i$ to propagate the information, and
the unique action is $s_i(x_i) = (\bot,\cdots, \bot)$, thus we assume $x_i \ge x_{\min}$.
In the following, we explicitly give the elimination order of dominated strategies.

Let $\pi_0$ be the number of players who are aware of the information.
As all root players know the information, by default, $\pi_0 \ge d + 1$.

\paragraph{Round 1.}
For any $x_i \ge x_{\min}$,
\begin{align*}
s_i(x_i) = (z_1, \cdots, 0< z_j < x_{\min}, \cdots, z_d) \preceq  s'_i(x_i) = (z_1, \cdots, 0, \cdots, z_d).
\end{align*}
Note that leaving less than $x_{\min}$ to $j$, $j$ is not able to propagate the information.
That is for any $\fs_{-i}$, under both strategy profiles $(s_i, \fs_{-i})$ and $(s'_i, \fs_{-i})$, the numbers of aware players are the same.
As $s'_i$ potentially brings higher referring utility, $u_i(s_i, \fs_{-i}) \le u_i(s'_i, \fs_{-i})$.
Particularly, if $\fs_{-i}$ is the one that $i$ receives exactly $x_i$ from her parent\footnote{If $i$ is in depth 0, we consider $x_i$ as the initial reward from the sender.},
$u_i(s_i, \fs_{-i}) < u_i(s'_i, \fs_{-i})$.
Thus $s_i(x_i) \preceq s'_i(x_i)$.
Eliminate all such dominated strategies for all players.

If $x_{\min} \le x_i < 2x_{\min}$, as $0 \le z_j \le x_i - x_{\min} < x_{\min}$ for all $j$,
for any remaining strategy $s_i$, $s_i(x_i) = (0,\cdots, 0)$.
That is when the remaining reward to a player is less than $2x_{\min}$, she will fully propagate the information.

As all root players are aware of the information and all of them will notify the players in depth 1, after this round,
for all possible strategy profiles, $\pi_0 \ge fG(2) \ge (d+1)G(2)$.




\paragraph{Round 2.}
First,  for any $x_i \ge 2x_{\min}$,
\begin{align*}
s_i(x_i) = (z_1, \cdots, z_j = 0, \cdots, z_d) \preceq  s'_i(x_i) =  (z_1, \cdots, x_{\min}, \cdots, z_d).
\end{align*}
This is because after the first round, all players who receives $x_{\min}$ will propagate the information to the next level,
i.e. full propagation. By Corollary \ref{cor:tree:fp} and the fact that $\pi_0 \ge (d+1)G(2) > G(2) + 3$, $s_i(x_i) \preceq s'_i(x_i)$.
Eliminate all such dominated strategies for all players. 

By induction, we have the following.

\paragraph{Round $l > 2$.}
First, for any $x_i \ge lx_{\min}$ and $g\in \{0,1,\cdots, l-2\}$,
\begin{align*}
s_i(x_i) = (z_1, \cdots, z_j = gx_{\min}, \cdots, z_d)  \preceq  s'_i(x_i) =  (z_1, \cdots, (l-1)x_{\min}, \cdots, z_d).
\end{align*}
To see why, note that after the first $l-1$ rounds, $\pi_0 \ge fG(l) \ge (d+1)G(l) > dG(l) + 2l -1$ and if $j$ receives $(l-1)x_{\min}$,
she and her descendants fully propagate the information.
By Corollary \ref{cor:tree:fp}, $s_i(x_i) \preceq s'_i(x_i)$.
Eliminate all such dominated strategies for all players. 

Second, for any $x_i \ge l x_{\min}$,
\begin{align*}
 s_i(x_i) =(z_1, \cdots, z_j, \cdots, z_d)  \preceq   s'_i(x_i) =(z_1, \cdots, (l-1)x_{\min}, \cdots, z_d),
\end{align*}
where $(l-1)x_{\min} < z_j < l x_{\min}$.
This is because after round $l-1$, if $(l-1)x_{\min} \le z_j < l x_{\min}$, by induction,
child $j$'s only possible action is $s_j(z_j)=((l-2) x_{\min},\cdots,(l-2) x_{\min})$.
Then leaving $ z_j $ to child $j$,  $j$  will not notify more nodes than leaving $(l-1) x_{\min}$ to him.
That is for any $\fs_{-i}$, under both strategy profiles $(s_i, \fs_{-i})$ and $(s'_i, \fs_{-i})$, the numbers of aware players are the same.
As $s'_i$ potentially brings higher referring utility, $u_i(s_i, \fs_{-i}) \le u_i(s'_i, \fs_{-i})$.
Moreover, if $\fs_{-i}$ is the one such that $i$ receives exactly $x_i$ from her parent, $u_i(s_i, \fs_{-i}) < u_i(s'_i, \fs_{-i})$.
Thus $s_i(x_i) \preceq s'_i(x_i)$.
Eliminate all such dominated strategies for all players.

Note that if $lx_{\min} \le x_i < (l+1)x_{\min}$, as $0 \le z_j \le x_i - x_{\min} < lx_{\min}$ for any $j$,
for any remaining strategy $s_i$ after the first $l$ rounds, $s_i(x_i) = ((l-1)x_{\min},\cdots, (l-1)x_{\min})$.
Combining with previous rounds, when the remaining reward to a player is less than $(l+1)x_{\min}$, she will fully propagate the information.

Moreover, after this round, all the players in depth $l+1$ are also aware of the information,
$\pi_0 \ge fG(l+1) \ge (d+1)G(l+1) \ge dG(l+1) + 2(l+1) -1$.

From above analysis, we observe that for $kx_{\min}\le x_i < (k+1)x_{\min}$,
the only remaining strategy is $s_i(x_i) = ((k-1)x_{\min}, \cdots, (k+-1)x_{\min})$, i.e., full propagation.
Moreover, the elimination order of dominated strategies is almost monotonic. \qed
\end{proof}

%
%
%
%
\bibliographystyle{splncs04}
\bibliography{ref}

\end{document}